\DeclareMathOperator*{\argmin}{arg\,min}
\newtheorem{Proposition}{Proposition}
\begin{document}


\title{A Novel Two-Layer DAG-based Reactive Protocol for IoT Data Reliability in Metaverse}



\author{Changlin Yang, Ying Liu, Kwan-Wu Chin, Jiguang Wang, Huawei Huang, Zibin Zheng

\thanks{
This article has been accepted by  43rd IEEE International Conference on Distributed Computing Systems (ICDCS 2023). 

C. Yang, Y. Liu, H. Huang and Z. Zheng are with the School of Software Engineering, Sun Yat-sen University, Zhuhai, Guangdong, China (e-mails: \{yangchlin6, liuy2368, huanghw28, zhzibin\}@mail.sysu.edu.cn).

K. Chin is with the School of Electrical, Computer and Telecommunications Engineering, University of Wollongong, NSW, Australia (e-mail: kwanwu@uow.edu.au).

J. Wang is with the School of Computer Science, Zhongyuan University of Technology, Zhengzhou, Henan, China (e-mail: 2020107237@zut.edu.cn)

Corresponding author is: Ying Liu. 
} }

\maketitle

\begin{abstract}
Many applications, e.g., digital twins, rely on sensing data from Internet of Things (IoT) networks, which is used to infer event(s) and initiate actions to affect an environment. This gives rise to concerns relating to data integrity and provenance. One possible solution to address these concerns is to employ blockchain. However, blockchain has high resource requirements, thereby making it unsuitable for use on resource-constrained IoT devices. To this end, this paper proposes a novel approach, called two-layer directed acyclic graph (2LDAG), whereby IoT devices only store a digital fingerprint of data generated by their neighbors. Further, it proposes a novel proof-of-path (PoP) protocol that allows an operator or digital twin to verify data in an on-demand manner. The simulation results show 2LDAG has storage and communication cost that is respectively two and three orders of magnitude lower than traditional blockchain and also blockchains that use a DAG structure. Moreover, 2LDAG achieves consensus even when 49\% of nodes are malicious.


\end{abstract}
\begin{IEEEkeywords}
%
Data Reliability, Graph, Hash, Signaling, Distributed agreement.
\end{IEEEkeywords}
\maketitle

%
\section{Introduction}
\label{Intro}
%

%
The Metaverse is poised to transform the Internet by closely coupling the digital and physical world~\cite{Lee2021AllON}, whereby digital twins (DTs) are used to represent a physical object/entity such as a person or machine or an entire factory.  
To do that, DTs require data from devices operating in Internet of things (IoT) networks.  For example, a device may collect health data from a person, which is then used by his/her digital twin~\cite{coorey2021health}.  
%
%
Indeed, an IoT network forms a bridge between the physical world and the Metaverse.  This is illustrated in Fig.~\ref{New_Fig_1}, where devices operating at the perception layer gather and send information via gateways to the Metaverse layer.  The DTs then use collected data to model and simulate a physical entity/process.  Further, these DTs may then initiate smart contracts based on received data~\cite{9076112}. 
Data reliability is thus a key concern to any users or DTs.  This is especially critical when DTs are relied upon for decision making; e.g., an operator may conduct simulations using DTs to determine how resources are to be allocated in a factory or when machines are to undergo maintenance.
Further, users/DTs may wish to audit collected data, which requires data to be traceable, immutable and transparent~\cite{9076112}.

One possible solution is to employ blockchain, where DTs leverage features such as immutability, traceability and integrity~~\cite{DTJurdak}. Consequently, a number of works have applied blockchain for secure data and information exchanges~\cite{sharma2020towards}. An example work is~\cite{malik2019trustchain}, which uses blockchain to provide end-to-end data integrity in a supply chain.  
%
%

%
\begin{figure}[t]
\centering
\includegraphics[width= 1\linewidth ]{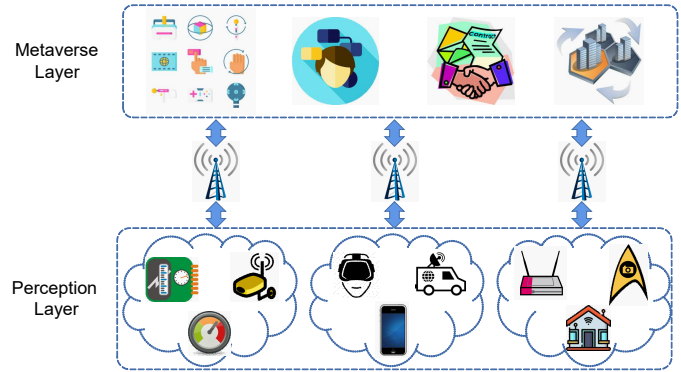}
\caption{An example Metaverse and IoT.  Devices collect data, which are then sent via gateways to Metaverse applications. \label{New_Fig_1}}
\end{figure}


Unfortunately, blockchain has a number of scalability issues that limit its use on IoT devices~\cite{zhou2020solutions}. 
%
Firstly, its storage requirement is significant, where each participating node needs to store a copy of the entire blockchain. For example, the size of the Bitcoin blockchain exceeds 380 GB in 2022 \cite{bitcoin_size} and the size of the Ripple blockchain, aka XRP ledger, is more than 14 TB \cite{XRP_ledger}. 
Secondly, its transaction throughput is limited, which relates to the speed in which data is written into a block.  For example, the average number of transactions processed by Bitcoin and Ethereum is respectively seven and 15 per second \cite{bach2018comparative}. Thus, it is not suitable for applications that generate frequent transactions.
%
Thirdly, its communication cost is significant, where an entire blockchain is sent to all participating nodes. For example, a Bitcoin node may upload more than 200 GB data per month~\cite{bitcoin_fullnode}. 
%
Lastly, we note that a key reason why blockchain has a high resource requirement is that it uses {\em proactive} consensus protocols. This means nodes spend resources to verify data via a consensus protocol even when a transaction or data is not used by participants.  Hence, all participants must store sufficient information to verify the data generated by other participants, which leads to high storage and communication overhead. 
To this end, we design a novel distributed approach, called a two-layer directed acyclic graph (2LDAG), to ensure data reliability in IoT networks.  
Briefly, each IoT node only stores its generated data and corresponding hash values, and those of its neighbors. 
Nodes only send and receive hashes to and from their neighbors, and they can generate proofs in parallel. The proofs form a DAG, linking all data in an IoT network. 
In addition, we propose a novel proof-of-path (PoP) protocol that allows any node to trace and verify the data of a source node. 
Advantageously, it is a {\em reactive} consensus protocol, meaning the data verification process is only initiated when a node needs to verify the data from a source node. 
In a nutshell, the contributions of this paper are as follows:
\begin{itemize}
    \item It proposes an architecture, called 2LDAG, that ensures the integrity and reliability of IoT data.  It embodies the following advantages over blockchain based solutions: improved storage, communication and throughput.
    \item It proposes a novel PoP protocol that operates in a reactive and distributed manner, and operates correctly even when a large number of malicious nodes exist. 
    \item It shows 2LDAG is secure against majority, Sybil, denial of service (DoS), eclipse and selfish attacks.
    %
    \item It presents the first study of 2LDAG.  
    Simulation results show that the storage and communication requirements of 2LDAG are two and three orders of magnitude lower than blockchain based solutions.
\end{itemize}

Next, Section~\ref{sec_related} discusses solutions that aim to scale blockchain and DAG blockchains. Section \ref{sec_arc} and \ref{sec_POP} demonstrate the 2LDAG architecture and outline PoP, respectively.  
The numerical simulation and results are shown in Section \ref{sec_sim}. Section \ref{sec_conclusion} concludes this paper. 

%
\section{Related Works\label{sec_related} }
The reliability of data in IoT or wireless sensor networks (WSNs) has been of interest for decades~\cite{sharma2020towards}. Traditional solutions leverage message authentication codes \cite{8354891} or public key infrastructures \cite{doukas2012enabling} to ensure data integrity. However, these solutions are centralized, which motivate the development of decentralized solutions such as blockchain~\cite{sharma2020towards}. 
%

%
%

To this end, we only review blockchain works that consider resource constrained nodes.  Specifically, we first review solutions that lower the storage and communication requirements of blockchain, and improve its throughput. We then discuss consensus protocols designed for IoT networks. 
\subsection{Blockchain Scalability}
Works that seek to scale blockchain have proposed light nodes, pruned nodes, sharding, coded blockchains and DAG~\cite{zhou2020solutions}. 
The following paragraphs review their main ideas and limitations. 

Nakamoto~\cite{nakamoto2008bitcoin} proposed so called light nodes that only store block headers. However, light nodes rely on full nodes to  store the entire blockchain and to verify data. 
Although light nodes have a lower storage requirement, Nakamoto's approach leads to data avaliability attacks~\cite{yu2020coded}. 
%

Pruned nodes remove information that is unnecessary for generating new blocks. 
For example, the nodes in~\cite{nakamoto2008bitcoin} remove unspent transaction outputs (UTXO), which is only utilized in one transaction and not used for block creation.  
%
However, both light and pruned nodes permanently delete data, meaning it is no longer possible to track the origin of blockchain data.
%

%
Sharding~\cite{wang2019sok} is a scheme whereby participants are divided into so called shards. Each shard maintains an independent sub-chain, and thus the participants in each shard have a comparatively lower storage requirement and higher throughput as compared to the conventional blockchain, especially with more shards. 
However, cross-shard activities require additional signaling complexity. Moreover, it sacrifices security due to the smaller number of participants in each shard. 
Coded blockchains~\cite{yang2022scaling} leverage error correction codes to encode blockchain data and store them at participants in a distributed manner.  
However, coded blockchains require extra computations for encoding and decoding. Moreover, there is a probability that an encoded block is not decodable~\cite{yang2022scaling}.

DAG blockchain~\cite{wang2020sok} changes the structure of blockchain from a chain to a graph. Consequently, transactions do not need to queue at the end of the chain, which significantly improves blockchain throughput. An example DAG blockchain is IOTA or Tangle~\cite{popov2018tangle}. Each transaction in IOTA verifies two previously generated transactions. There are no `miner' nodes where they generate transactions freely and simultaneously.
However, DAG blockchain nodes need to store the entire graph, i.e., the blockchain, to validate blocks generated by other nodes. 
Hence, it is not suitable for IoT devices with limited storage space. 

\subsection{Consensus Protocols}
Consensus protocols help nodes agree on information in a distributed manner even when some nodes are malicious or dishonest. 
For a comprehensive survey of consensus protocols, the reader is referred to~\cite{ferdous2020blockchain}.
Here, we will only discuss consensus protocols that are suitable for resource-constrained devices. They either use a proof based mechanism or practical byzantine fault tolerance (PBFT) protocols. 
%

%
%
Proof-of-stake (PoS)~\cite{salimitari2020survey} 
%
%
{\em miners} compete to generate blocks based on their {\em stakes} rather than computation power unlike PoW. 
Hence, PoS is suitable for IoT applications that use monetary rewards, which can be used as `stakes'.
In proof-of-elapsed-time (PoET)~\cite{bowman2021elapsed}, a node generates a block with a minimum random waiting timer. However, the verification in PoET requires a trusted execution environment. Such an environment is usually deployed on specific hardware, which is not suitable for a wide range of IoT applications~\cite{salimitari2020survey}.
%

%
PBFT based consensus protocols have high throughput, low latency and low computational overhead, which make them desirable for IoT networks~\cite{ferdous2020blockchain}. However, conventional PBFT has high communication overheads due to its three phases negotiation process. 
To this end, many works have seek to reduce the said communication overheads. 
%
For example, the work in~\cite{miller2016honey} proposed HoneybadgerBFT, which leverages error correction codes to reduce communication overhead when propagating blocks when using PBFT.  
They, however, incur extra computational complexity. 
%

\subsection{Discussion}
Existing blockchain scalability solutions have not jointly reduced storage, communication and computation cost whilst guaranteeing a sufficient level of throughput and security for IoT applications. A key reason is that these blockchains use proactive consensus protocols, where nodes must reach consensus on generated data. Inspired by the immutability, reliability and traceability features of blockchain, in the following section, we introduce 2LDAG and PoP.
\section{2LDAG Architecture \label{sec_arc} }
2LDAG has a {\em physical} layer and a {\em logical} DAG layer. At the {\em physical} layer, nodes maintain a list of neighbors, construct data blocks using generated data, and transmit/receive hashes to/from their neighbors.
The {\em logical} layer interconnects data blocks, whereby the hash of data blocks forms a DAG.  This DAG is used to verify a data block.

%
\subsection{Physical Layer} \label{phy_lay}
We model an IoT network as a graph $\mathcal{G}(\mathcal{V}, \mathcal{E})$.  It has a set $\mathcal{V}$ of static nodes inter-connected by an un-directed edge in the set $\mathcal{E}$. Each node is indexed by $i \in \mathcal{V}$, and at most $\gamma$ nodes are malicious.  Note, we do not consider trust management, where we assume there is a complementary method to register a device onto a network.
Each edge is denoted as $e_{ij} \in \mathcal{E}$, where $i, j\in \mathcal{V}$.  For node $i$, its neighbors set is
\begin{align}
    \mathcal{N}(i)  = \{j \;| \; e_{ij} \in \mathcal{E}, j \in \mathcal{V}\}.
\end{align}
All nodes have topological information, i.e., they know $\mathcal{G}(\mathcal{V}, \mathcal{E})$. 
%
%
Each node generates data at a rate of $r_i$ bit/s.  We emphasize that nodes only store their generated data; i.e., a node does not forward its data to any other nodes in $\mathcal{V}$. We use $m_i$ to denote the storage capacity of node $i$. 

%
Nodes store their generated data in so called {\em data blocks}, each of which has two segments: {\em body} and {\em header}. The body segment stores the data collected by a node and has a constant size of $C$ bits. This means node $i$ generates a data block every $C/r_i$ seconds.  The header segment contains data used to maintain the logical layer; see details in Section \ref{sec_data_block}. 

%
We use $b_{i, t}$ to indicate the $t$-th data block generated by node $i$; we also use $t$ to denote the generation time of $b_{i, t}$.  
Let $b^h_{i, t}$ and $b^d_{i, t}$ denote the header and body of $b_{i, t}$, respectively. 
Node $i$ records all its data blocks in the set $\mathcal{S}_i$.

After node $i$ generates data block $b_{i, t}$, it transmits the corresponding digest $H(b^h_{i, t})$ to all its neighbors in $\mathcal{N}(i)$, where $H(.)$ denotes a hash function. 
%
%
The data generation process at nodes will be presented in Section \ref{sec_block_generate}. 

\subsection{Data Block Structure}
\label{sec_data_block}
%
The data block structure used by nodes is shown in Fig.~\ref{fig_header}. A block header consists of the following fields: 
\begin{itemize}
    \item {\em  Version} is used to track changes and updates to the 2LDAG architecture and protocol.
    It has size $f^v$.
    
    \item {\em Time} records the generation time of a data block.
    We use $f^t$ to denote its size.

    \item {\em Root} corresponds to a Merkle tree root. Its size is $f^H$, which is also the hash size. 
    
    %
    \item {\em  Digests} record hashes that node $i$ has received from its neighbors, i.e., $H(b^h_{j, t}), \forall j \in \mathcal{N}(i) $, and the hash of its previous data block, i.e., $H(b^h_{j, t-1})$. The size of this field is $f^H \times (n+1)$, where $n = |\mathcal{N}(i)|$; note, $|.|$ denotes the cardinality of a set.
    

    \item {\em  Nonce} is a number that ensures the hash of version, time, root, digest and nonce meets a given difficulty level. This difficulty is chosen such that a device can find a suitable nonce quickly, see details in Section \ref{sec_block_generate}. Its size is $f^n$.  
    %

    \item {\em  Signature} is calculated using a public-private key scheme. 
    Each node $i$ has public key $pk_i$ and private key $sk_i$. Let $E(.)$ and $D(.)$ denote respectively an encryption and decryption function.   
    We can apply a low complexity encryption scheme; e.g., the one proposed in \cite{7784596} for smart home applications. 
    Let $m$ be the fields in a block header.  The signature field contains $s_{i, t} = E(m, sk_i)$. The size of $s_{i, t}$ is $f^s$. 

    
    %
    
\end{itemize}
\begin{figure}[t]
\includegraphics[width= 1 \linewidth ]{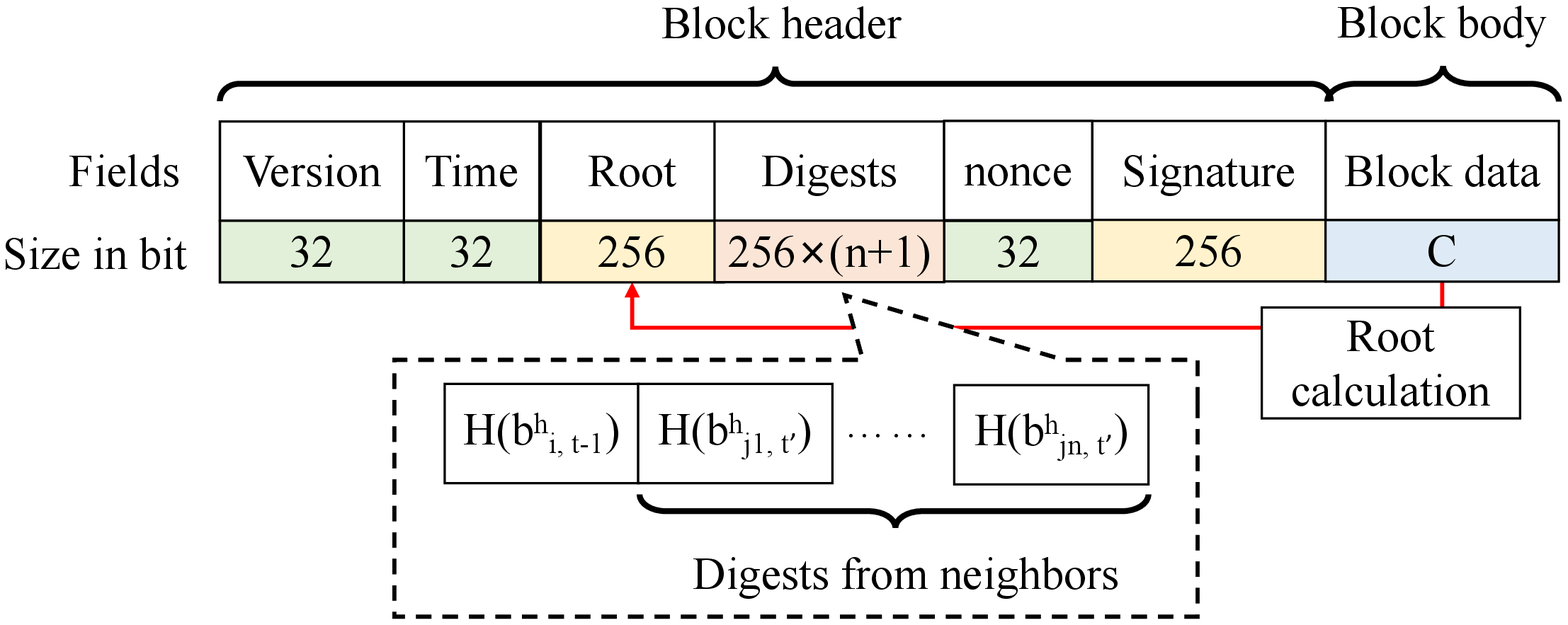}
\centering
\caption{Data block structure of 2LDAG.  The size of {\em Version}, {\em Time} and {\em Nonce} is 32 bits.  The {\em Root} and {\em Signature} has size 256 bits. The {\em Digests} field contains the digest of the latest block generated by a node, and the digests from its neighbors.  This means its size is $256 \times (n+1)$, where $n$ is the number of neighbors.  The body of all data blocks has size $C$ bits. }
\label{fig_header}
\end{figure}
Given that a block body has sampled data only, and has size $C$ bits, the total size of a data block at node $i$ is
\begin{align}
\label{equ_size}
    f_{i} = f_{c} + f^H (|\mathcal{N}(i)|+1)  + C,
\end{align}
where $f_{c}$ is a constant defined as 
\begin{align}
\label{equ_size_2}
    f_{c} = f^v + f^t +f^H + f^n + f^s.
\end{align}

%
\subsection{Logical Layer \label{sec_Logical_Layer}}
Define $\bar{\mathcal{G}}(\mathcal{B}, \mathcal{L})$ as a DAG, where the set of data blocks is $\mathcal{B} = \cup_{\forall i \in \mathcal{V}} \mathcal{S}_i$.  The set $\mathcal{L}$ contains directed edges that connect data blocks.  There is an edge $(b_{i, t}, b_{j, t'})$ in $\mathcal{L}$ if data block header $b^h_{j, t'}$ contains the digest of block header $b^h_{i, t}$.
%
%
We call $b_{i, t}$ a {\em parent} of $b_{j, t'}$, and $b_{j, t'}$ a {\em child} of $b_{i, t}$. In this paper, if the context is clear, we omit $i, t$ and write $b_{i, t}$ as $b_x$.
Define a {\em path} from data block $b_x$ to $b_y$ as
\begin{align}
\label{equ_p_x_y}
    \mathcal{P}_{x, y} &  = \{b_{n_1}, \ldots, b_{n_q}, \ldots ,b_{n_Q}\},
\end{align}
where $n_1 = x$, $n_Q = y$ and $Q$ is the path length. 
For any two adjacent data blocks $b_{n_q}$ and $b_{n_{q+1}}$ in path $\mathcal{P}_{h, k}$, data block $b_{n_q}$ is a {\em parent} of $b_{n_{q+1}}$.  
Further, we say data block $b_y$ is a  {\em descendant} of $b_x$.  
We then say node $i$ \textit{points} to block $b_x$ if data block $b_y$ in node $\mathcal{S}_i$ is a descendant of $b_x$.

\subsection{Data Block Generation \label{sec_block_generate} }
%
%
At network initialization, i.e., $t=0$, each node $i\in\mathcal{V}$ generates a genesis block; $b_{i, 0}$. Then, node $i$ calculates the digest $H(b^h_{i, 0})$ and transmits it to each neighbor node $j$ in $\mathcal{N}(i)$. 
Once node $i$ receives the digest $H(b^h_{j, 0})$ from neighbor $j$, node $i$ constructs a set $\mathcal{A}_i = \{H(b^h_{j, 0})\; |\; j \in \mathcal{N}(i)\}$,  which contain the latest digest from each neighbor in $\mathcal{N}(i)$; note, we have $|\mathcal{A}_i|= |\mathcal{N}(i)|$.
%
%

%
Given $C$ bits of data, node $i$ constructs data block $b_{i, t}$. 
In the header of $b_{i, t}$, node $i$ records the version and time, and calculates root $M(b^d_{i, t})$, where $M(.)$ is a Merkle tree root function.   It then includes in the {\em Digests} field, 
denoted by $\Delta_{i}$, the digests received from its neighbors as well as the digest of its previous data block $H(b^h_{i, t-1})$; i.e., we have $ \Delta_{i, j} = \mathcal{A}_i \cup \{H(b^h_{i, t-1})\}$. Next, node $i$ determines a nonce $n_{i, t}$ that satisfies
\begin{align}
    H(M(b^d_{i, t}),  \Delta_{i, j}, n_{i, t}) \le \rho, 
\end{align}
where $\rho$ is a small fixed difficulty level such that nodes can find a suitable $n_{i, t}$ quickly, e.g., in seconds. 
%
%
%
After that, node $i$ calculates the Signature field using Version $v$, Time $t$, Root $M(b^d_{i,t})$, Digests $\Delta_{i, j}$ and Nonce $n_{i, t}$. Formally, 
\begin{align}
\label{equ_s}
    s_{i,t} = E(H(v, t, M(b^d_{i,t}), \Delta_{i, j}, n_{i, t}), sk_i). 
\end{align}
%
%
Lastly, node $i$ transmits the digest $H(b^h_{i, t})$ to all its neighbors in $\mathcal{N}(i)$ and appends data block $b_{i, t}$ to its storage $\mathcal{S}_i$. After neighbor $j$ receives digest $H(b^h_{i, t})$ from node $i$, it updates the set $\mathcal{A}_j$ by replacing $H(b^h_{i, t-1}) \in \mathcal{A}_i$ with $H(b^h_{i, t})$. 
%

%

%



We now use an example to illustrate 2LDAG.  Referring to Fig.~\ref{fig_example1}, there are four physical nodes $A$, $B$, $C$, and $D$. Their neighbors set is respectively $\mathcal{N}(A) = \{B\}$, $\mathcal{N}(B) = \{A, C, D\}$, $\mathcal{N}(C) = \{B, D\}$ and $\mathcal{N}(D) = \{B, C\}$.  Assume node $D$ first generates a block, say $D_1$. 
It transmits digest $H(D_1^h)$ to node $B$ and $C$.  When node $C$ generates a block say $C_1$, it contains the digest $H(D^h_1)$ from node $D$. Node $C$ then transmits digest $H(C^h_1)$ to node $B$ and $D$. Similarly, digest $H(A^h_1)$ is sent to node $B$ after node $A$ generates block $A_1$. Hence, block $B_1$ generated by node $B$ includes digests $H(A^h_1), H(C^h_1)$ and $H(D^h_1)$. As shown in Fig.~\ref{fig_example1}, the corresponding digests stored in these blocks form a DAG. 
Observe that each node only stores data blocks generated by itself and transmits only block digests. For example, node $B$ only stores data block $B_1$ in its local storage, and the communication cost of node $B$ relates to the transmission and reception of three digests to and from nodes $A$, $C$, and $D$, respectively.  This explains why 2LDAG has a low storage and communication cost.
\begin{figure}[t]
\includegraphics[width= 1 \linewidth ]{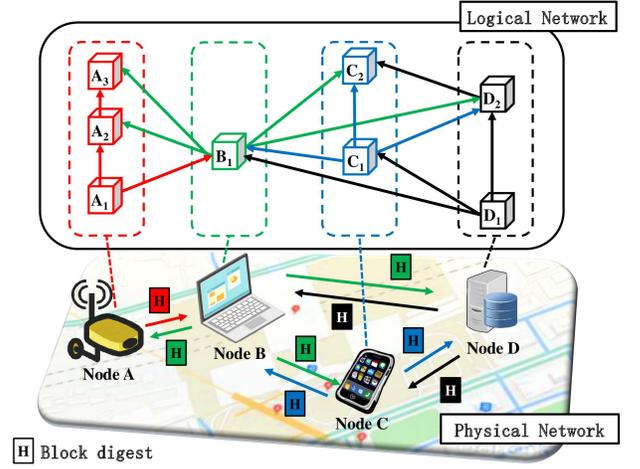}
\centering
\caption{A 2LDAG example. In the physical network, each node only stores its own data blocks; see colored dash boxes. Moreover, a node only transmits digests to neighbors; see colored squares with an `H'. In the logical network, data blocks are linked by directed edges and form a DAG.}
\label{fig_example1}
\end{figure}

\section{Proof-of-Path Protocol \label{sec_POP}}
%
%
%
A PoP process is initiated when a node, say $i$, needs to verify the content of a block stored at say node $j$, namely $b_x = b_{j,t} \in \mathcal{S}_j$.
We call node $i$ a {\em validator}, and node $j$ a {\em verifier}. 
Node $i$ first retrieves block $b_{j, t}$ from node $j$, which includes the block header $b^h_{j, t}$ and block body $b^d_{j, t}$. 
%
%
%
Using the said DAG, node $i$ aims to identify a set of nodes that are then stored in the set $\mathcal{R}_i$; these nodes contain a block that {\em points} to block $b_{j,t}$.  
Consensus of block $b_{j,t}$ is reached once the following condition is satisfied: $|\mathcal{R}_i| \ge  \gamma+1$, where $\gamma$ is the number of tolerable malicious nodes.
%
In other words, there is a path $\mathcal{P}_i$ containing at least $\gamma+1$ nodes that agree on the integrity of block $b_{i, j}$. 
%

%
%
%
%
Initially, we have $\mathcal{R}_i=\{j\}$ and $\mathcal{P}_i = \{b_{j, t}\}$.
Define the last added block on path $\mathcal{P}_i$ as a {\em verifying block}, denoted by $b_{v, t}$, which is initially set to $b_{v, t} = b_{j, t}$. 
To update the above sets, let $b_{j',t'}$ be a child data block of $b_{v, t}$.  The validator then retrieves the corresponding block header $b^h_{j',t'}$ and checks 
whether the value $H(b^h_{v,t})$ is in $b^h_{j',t'}$. 
%
If yes, the validator makes the following updates: (1) add block $b_{j',t'}$ to $\mathcal{P}_i$, (2) add node $j'$ to $\mathcal{R}_i$, and (3) replace $b_{v, t}$ with $b_{j',t'}$.
%

%
To illustrate PoP, consider Fig~\ref{fig_pop}.  Consider $\gamma = 2$, meaning consensus is reached if there are three nodes in $\mathcal{R}_i$.  Assume nodes $B$, $C$, and $D$ are neighbors of each other. Node $A$'s only neighbor is node $B$ and node $E$ only neighbor is node $D$. Assume there is a request to verify block $B1$. Then two possible paths are $\mathcal{P} = \{B1, D1, E2\}$ and $\mathcal{P}' = \{B1, A2, B2, C2\}$. Both paths contain three disjoint nodes, meaning there are $\gamma + 1$ nodes that directly or indirectly verify the integrity of block $B1$. Hence, consensus is reached if a validator constructs any of these two paths. However, path $\mathcal{P}'$ is longer, which means a validator that chooses path $\mathcal{P}'$ needs to retrieve an additional data block header. 
We next propose an algorithm to reduce the total number of block headers retrieved by a validator when selecting the next verifying node of data block $b_{v,t}$. 

\begin{figure}[htbp]
\includegraphics[width= 0.9\linewidth ]{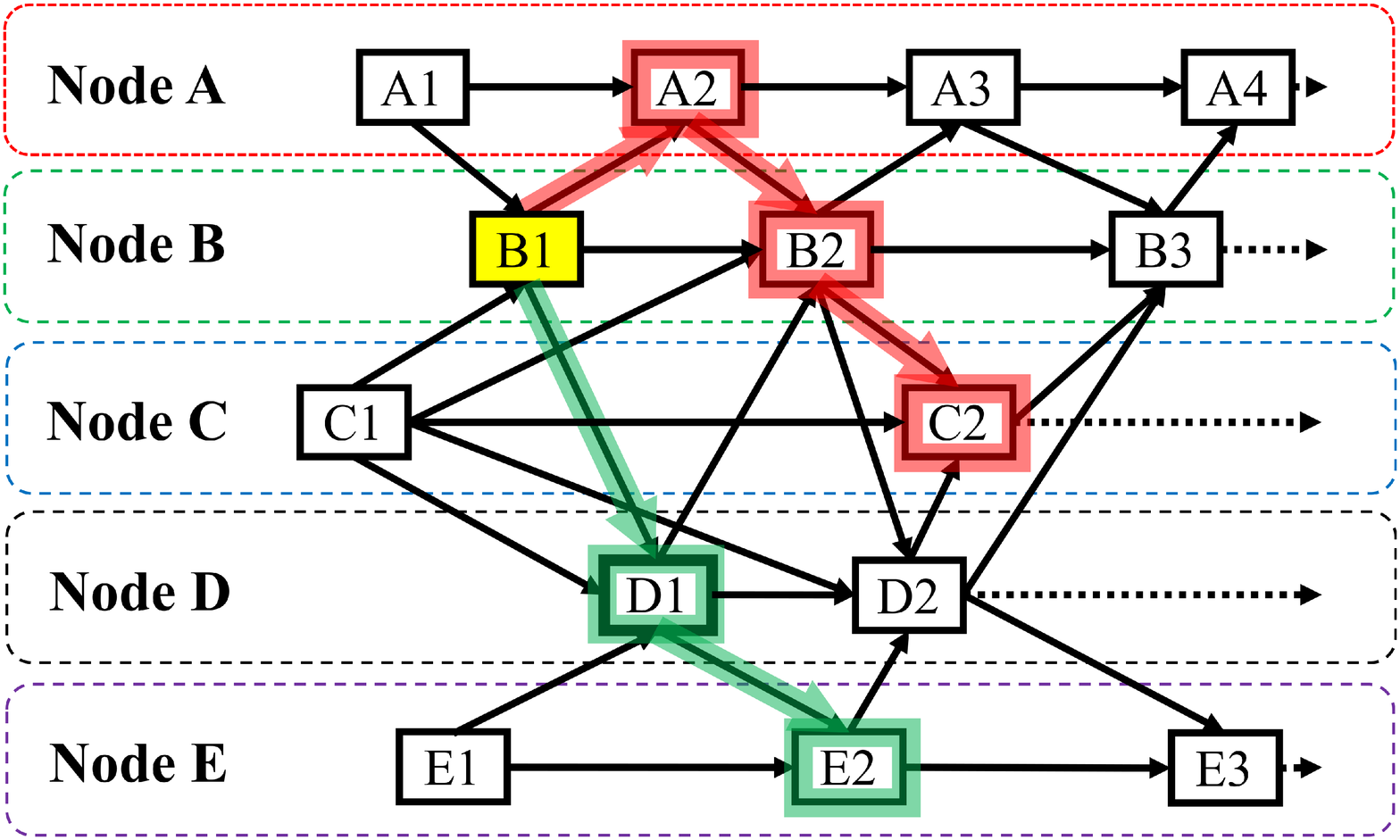}
\centering
\caption{An example of PoP. When validating block $B1$, the validator can obtain the header of block $D1$ and $E2$, see the green path or the header of block $A2$, $B2$ and $C2$, see the red path. Both paths achieve consensus, but the green path contains fewer number of data blocks. 
}
\label{fig_pop}
\end{figure}
%

\subsection{Weighted Path Selection (WPS) \label{sec_selec_neighbor}}
%
There are two cases to consider when selecting a verifying node $b_{v, t}$. In case-1, if a validator node $i$ selects a block stored in a node $j' \in \mathcal{N}(v)$ that has been included in $\mathcal{R}_i$ then $j'$ does not contribute to the consensus of block $b_{j, t}$.  Hence, the validator node $i$ needs to select a node $j'$ that is not in  $\mathcal{R}_i$. 
In case-2, all neighbors of node $v$ may be already in $\mathcal{R}_i$, then selecting any neighbor has the same effect.
To this end, we introduce a weight $w_v$ for node $v$. Here, $w_v$ is defined as the fraction of node $v$ neighbors that is not included in $\mathcal{R}_i$, i.e., 
\begin{align}
\label{equ_w_j}
    w_v = \frac{|\mathcal{R}_i \cap \{\mathcal{N}(v) \cup \{v\} |}{|\mathcal{N}(v)| +1 }.
\end{align}
%
%
%
Node $i$ then selects a $j' \in \mathcal{N}(v)$ with the minimum $w_{j'}$ value.  Mathematically,
\begin{align}
 j' = \argmin_{\hat{v} \in \mathcal{N}(v)} w_{\hat{v}}. 
\end{align}
If multiple neighbors have the same weight, the algorithm  selects the one not in $\mathcal{R}_i$. 

The pseudocode of WPS is shown in Algorithm~\ref{algo_path1}. 
Its inputs include $\mathcal{R}_i$, and $\mathcal{N}(v)$.
Its output is a node $j'$ that stores the child of $b_{v,t}$ to be included on path $\mathcal{P}_i$. 
%
%
Node $i$ first calculates weight $w_{j^*}$ using Eq.~\eqref{equ_w_j} for each $j^* \in \mathcal{N}(v)$; 
see lines \ref{algo1_line1}-\ref{algo1_line3}. It then selects the node with the minimum weight and adds it into a set of candidate nodes $\mathcal{Z}$; see line \ref{algo1_line4}. 
If there is only one node $j'\in \mathcal{Z}$, then it returns node $j'$; see lines \ref{algo1_line5}-\ref{algo1_line7}.
If we have $|\mathcal{Z}| \ge 2$, there are two cases: 1) if  $\mathcal{Z} \cap \mathcal{R}_i = \emptyset$ or $\mathcal{Z} == \mathcal{R}_i$, then all nodes in $\mathcal{Z}$ can be selected, or 2) if $\mathcal{Z} \cap \mathcal{R}_i \neq \emptyset $, then node $i$ selects one in $\mathcal{Z} \backslash \mathcal{R}_i$; see lines \ref{algo1_line8}-\ref{algo1_line13}.
\begin{algorithm}
    \KwInput{$\mathcal{R}_i$, $\mathcal{N}(v)$} 
    \KwOutput{$j'$} 
    \For{each $j^* \in \mathcal{N}(v)$ \label{algo1_line1} }{
    Calculate $w_{j^*}$ using Eq. \eqref{equ_w_j}
    }\label{algo1_line3}
    $\mathcal{Z} =\argmin_{\hat{v} \in \mathcal{N}(v)} w_{\hat{v}}$ \label{algo1_line4} \\
    \If{$|\mathcal{Z}| = 1$ \label{algo1_line5} }{
    Set $j'$ to the only element in $\mathcal{Z}$     
    }\label{algo1_line7}
    \If{$|\mathcal{Z}| \ge 2 \; \&\& \;  \mathcal{Z} \cap \mathcal{R}_i = \emptyset$ or $\mathcal{Z} == \mathcal{R}_i$ \label{algo1_line8} }{
    Set $j'$ to a random element in $\mathcal{Z}$
    }
    \If{$|\mathcal{Z}| \ge 2 \; \&\& \; \mathcal{Z} \cap \mathcal{R}_i \neq \emptyset $}{
    Set $j'$ to a random element in $\mathcal{Z} \backslash \mathcal{R}_i$
    } \label{algo1_line13}
    \label{algo_1_case_4}
    return $j'$
\caption{Weighted Path Selection (WPS)}
\label{algo_path1}
\end{algorithm} 
We now use Fig.~\ref{fig_pop} to illustrate WPS. When determining the next verifying block or node of $B1$, calculate the weight of node $B$'s three neighbors to yield $w_A = 0.5$, $w_C = 1/3$ and $w_D = 1/4$. Note that node $C$ does not have a block that is a child of $B1$, and node $D$ has the minimum weight. Hence, we choose $D1$ as the second block to be added onto the path at the validator. Next, when determining the verifying block or node of $D1$, we obtain the weight of its neighbors as $w_B = 0.5$, $w_C = 2/3$ and $w_E = 0.5$.  We choose $E2$ because node $B$ exists in $\mathcal{R}_i$.
\subsection{Trust Path Selection (TPS)\label{sec_TPS} }
From the example in Fig.~\ref{fig_pop}, we see that one may need to obtain $D1$ and $E2$ again when it verifies block $C1$.  This wastes both computation and communication resources. 
To this end, we require a node, say $i$, to store block headers that it has verified in a set $\mathcal{H}_i$. Then, node $i$ is able to directly construct a path $\{C1, D1, E2\}$ using the blocks in $\mathcal{H}_i$. 
This is carried out via a TPS algorithm, which is detailed next.
A validator node $i$ can verify block $b_{v, t}$ directly if its child block header is in the set $\mathcal{H}_i$, i.e., 
\begin{align}
    H(b^h_{v, t}) \in b^h_{j^*, t^*} \in \mathcal{H}_i,
\end{align}
where $b_{j^*, t^*}$ is a child block of the block $b_{v, t}$.
The validator node $i$ can then update the verifying block $b_{v, t}$ to $b_{j^*, t^*}$, add block $b_{j^*, t^*}$ to the set $\mathcal{R}_i$ and path $\mathcal{P}_i$. 
%
%
%

%
Algorithm~\ref{algo_length} details TPS.  
%
It iteratively updates $b_{v, t}$ using the data block headers in $\mathcal{H}_i$.  It returns the updated set $\mathcal{R}_i$, $\mathcal{P}_i$ and $b_{v, t}$ until no block headers in $\mathcal{H}_i$ contain $H(b_{v, t})$.
\begin{algorithm}
    \KwInput{$\mathcal{H}_i$, $\mathcal{R}_i$, $\mathcal{P}_i$, $b_{v, t}$} 
    \KwOutput{$\mathcal{R}_i$, $\mathcal{P}_i$, $b_{v,t}$} 
    \While{$\exists b^h_{j', t'} \in \mathcal{H}_i$ such that $H(b^h_{v, t}) \in b^h_{j', t'}$}{
    $\mathcal{R}_i = \mathcal{R}_i \cup \{j'\}$\\
    $\mathcal{P}_i = \mathcal{P}_i \cup \{b_{j', t'}\}$\\
    $b_{v, t} = b_{j', t'}$
    }
    return $\mathcal{R}_i$, $\mathcal{P}_i$, $b_{v,t}$ 
\caption{Trust path selection (TPS)}
\label{algo_length}
\end{algorithm} 
%

\subsection{Putting It All Together}
\label{sec_pop_detail}
We now discuss the procedures performed by a {\bf validator} node and a {\bf responder} node. Here, a responder node is defined as a node that sends information to a validator node in order to verify block $b_{v,t}$. 
%
%

{\bf Validator: }
The pseudocode of a validator is shown in Algorithm \ref{algo_validator}. 
%
%
Assume validator $i$ aims to verify block $b_{j, t}$.  It first retrieves $b_{j, t}$ from node $j$, which is also called the {\em verifier}. It then calculates the Merkle tree root $M(b^d_{j,t})$ and checks it against the header $b^h_{j, t}$. If there is inconsistency, i.e., $M(b^d_{j, t})$ does not equal $GetRoot(b^h_{j, t})$, PoP returns an error.   
Here, {\em GetRoot($b^h_{j,t}$)} returns the value in the {\em Root} field in the block header of $b^h_{j,t}$.
%
%
Otherwise, in line-3 Algorithm \ref{algo_validator}, validator $i$ makes the following settings: $\mathcal{R}_i=\{j\}$, $\mathcal{P}_i = \{b_{j, t}\}$ and the verifying block is set to $b_{v, t} = b_{j, t}$. 
Next, using the header $b^h_{j,t}$ of block $b_{j,t}$, validator $i$ calculates digest $H(b^h_{j,t})$ and constructs the set $\mathcal{R}_i$ and $\mathcal{P}_i$ using the set $\mathcal{H}_i$ as per TPS.
If consensus is not reached, i.e., $|\mathcal{R}_i| \leq \gamma$, the validator finds the next block for path $\mathcal{P}_i$ and adds a new node into the set $\mathcal{R}_i$.
%
Specifically, it calls WPS to find the next verifying node of data block $b_{v,t}$; WPS finds a node $j' \in \mathcal{N}(v)$ to verify block $b_{v, t}$.
%
%
After identifying node $j'$,  validator $i$ transmits a {\em REQ\_CHILD} message to node $j'$ containing $H(b^h_{v,t})$; see line-\ref{algo_3_16}.
%

%
Once validator $i$ receives a {\em RPY\_CHILD} message from node $j'$ within timeout $\tau$, see line-\ref{algo_3_18},
it extracts $b^h_{j', t^*}$ from the message; see line-\ref{algo_3_19}.
%
%
It then calculates $H(b^h_{j, t})$ and checks the digest stored in $b^h_{j', t^*}$. 
The function {\em GetDigest($b^h_{j', t^*}, v$)} returns the digest of node $v$ from block header $b^h_{j', t^*}$. 
%
%
If the above two digests are consistent, i.e., $H(b^h_{v, t}) == GetDigest(b^h_{j', t^*}, j)$, then node $j'$ {\em points} to block $b_{j, t}$.   Thus, validator $i$ adds node $j'$ into the set $\mathcal{R}_i$ and adds block index $b_{j', t^*}$ to the path $\mathcal{P}_i$. 
It then sets the verifying block $b_{v, t}$ to $b_{j', t^*}$.
Next, validator $i$ calls TPS to continue constructing the path; see line-\ref{algo_3_9}.
%
%
%
Otherwise, if the digests are not consistent or node $j'$ does not reply, node $i$ removes $j'$ from $\mathcal{N}(v)$ and selects another node to send a {\em REQ\_CHILD} message. 
If validator $i$ fails to receive a valid {\em RPY\_CHILD} message from all nodes in $\mathcal{N}'$, which is initialized to $\mathcal{N}(v)$ in line \ref{algo_3_line_13}, it rolls back to the previous verifying block and sends {\em REQ\_CHILD} message to nodes other than $v$; see lines \ref{algo_3_line_26}-\ref{algo_3_line_30}. 
To do this, it removes $v$ from $\mathcal{R}_i$, and also from $\mathcal{V}'$, which is initilized as $\mathcal{V}' = \mathcal{V}$ in line \ref{algo_3_line_14}.
Note that the set  $\mathcal{P}_i$ is a stake because the path in DAG has no loop. The validator then pops out $b_{v,t}$, which is the peak item, from $\mathcal{P}_i$ and sets verifying block to the block at the peak of updated $\mathcal{P}_i$. 
If the validator rolls back to the first block on the path, i.e., the verifier block $b_{j, t}$, and unables to receive a valid  {\em RPY\_CHILD} message, i.e., $\mathcal{R}_i == \emptyset$, Algorithm \ref{algo_validator} returns an error.
In contrast, if it receives a valid  {\em RPY\_CHILD} message, it sets $\mathcal{V}' = \mathcal{V}$ and continues to construct the set $\mathcal{R}_i$ and $\mathcal{P}_i$. 
%
%
%
%
%
%
%
%
Upon reaching consensus, i.e.,  $|\mathcal{R}_i| \ge \gamma + 1$, validator $i$ stores the block header $b^h_{j, t}$ for all blocks $b_{j, t}$ in $\mathcal{P}_i$ into the set $\mathcal{H}_i$; see line-\ref{algo_3_33}.  
%

%
%
\begin{algorithm}
    \KwInput{$b_{j, t}$, $\mathcal{G}(\mathcal{V}, \mathcal{E})$, $\mathcal{H}_i$} 
    \KwOutput{$\mathcal{H}_i$, `Success' or `Error' } 

    //*** Initialization ***// \\
    %
    $[b^h_{j, t}, b^d_{j, t}] == Request(b_{j, t})$ \\
    %
    \If{$M(b^d_{j, t})\; != \text{GetRoot}( b^h_{j, t})$}{return $\mathcal{H}_i$, `Error'}
    $\mathcal{R}_i=\{j\}$, $\mathcal{P}_i = \{b_{j, t}\}$ and $ b_{v, t} = \{b_{j, t}\}$ \\
    %
    //*** Construct path *** // \\
    \While{True \label{pop_while1} }{
    $(\mathcal{R}_i, \mathcal{P}_i, b_{v, t}) = \text{TPS}(\mathcal{H}_i, \mathcal{R}_i, \mathcal{P}_i, b_{v, t})$ \label{algo_3_9} \\ 
    %
    \If{$|\mathcal{R}_i| \ge \gamma + 1$}{
    break \\
    }
    $\mathcal{N}' = \mathcal{N}(v)$ \label{algo_3_line_13} \\ 
    $\mathcal{V}' = \mathcal{V}$ \label{algo_3_line_14}\\ 
    \While{$\mathcal{N}' \neq \emptyset$ \label{pop_while2} }{
    $j' = \text{WPS}(\mathcal{R}_i, \mathcal{N}')$ \\
    %
    Send(`REQ\_CHILD', $H(b^h_{v,t})$, $j'$) \label{algo_3_16} \\ 
    //*** Wait and receive message ***// \\
    \If{Msg = Receive(`RPY\_CHILD') \label{algo_3_18} }{
    %
    $b^h_{j', t^*} = GetHeader(Msg)$ \label{algo_3_19} \\ 
    %
    %
    \If{$H(b^h_{v, t}) == GetDigest(b^h_{j', t^*}, v)$}{ break}
    }
    $\mathcal{N}' \leftarrow \mathcal{N}'\;\backslash\; \{j'\}$ \\ 
    \label{pop_while2_end}
    \If{$\mathcal{N}' == \emptyset$ \label{algo_3_line_26} }{
    $\mathcal{R}_i.remove(v)$, $\mathcal{V}'.remove(v)$ \label{algo_3_line_pop1} \\
    $\mathcal{P}_i.pop(b_{v,t})$, \\
    %
    $b_{v,t} = \mathcal{P}_i.peak()$ \label{algo_3_line_resets}\\
    $\mathcal{N}(v)  = \{y \;| \; e_{vy} \in \mathcal{E}, y \in \mathcal{V}'\}$ \label{algo_3_line_resets_N}\\
        } \label{algo_3_line_30}
    \If{$\mathcal{R}_i == \emptyset$}{return $\mathcal{H}_i$, `Error'}
    }
    $\mathcal{R}_i \leftarrow \mathcal{R}_i \cup \{j'\}$, $\mathcal{P}_i \leftarrow \mathcal{P}_i \cup \{b_{j', t^*}\}$\\
    $b_{v, t} \leftarrow b_{j', t^*}$\\
    } \label{pop_while1_end}
    %
    %
    $\mathcal{H}_i \leftarrow \mathcal{H}_i \cup \{b^h_{v, t} \;|\; b_{v, t}\in \mathcal{P}_i\}$ \label{algo_3_33} \\ 
    return $\mathcal{H}_i$, `Success'
\caption{Validator}
\label{algo_validator}
\end{algorithm} 
{\bf Responder: }
%
The pseudocode of a responder is shown in Algorithm \ref{algo_responder}. 
It finds a block $b_{j', t^*}$ in $\mathcal{S}(j')$ whereby its block header $b^h_{j', t^*}$ contains $H(b^h_{v,t})$.
It then sends the validator a message containing $b^h_{j', t^*}$. 
Formally, node $j'$ finds a subset of child blocks $C_{j'}(b_{v,t})$ where 
\begin{align}
    C_{j'}(b_{v,t}) = \{b_{j', t'} \;|\;  H(b^h_{v,t}) \in b_{j', t'}, b_{j', t'} \in \mathcal{S}_{j'}\}. 
    \label{eq_C_j}
\end{align}
One or more blocks may contain the digest $H(b^h_{v,t})$, i.e., $|C_{j'}(b_{v,t})| > 1$. 
This is because nodes generate data blocks at different rates. If a node $v$ has a higher rate, its neighbor $j'$ may include the digest $H(b^h_{v,t})$ in multiple blocks. For example, as shown in Fig.~\ref{fig_example1}, the digest of $B_1$ is included in both $A_2$ and $A_3$. 
To this end, node $j'$ needs to select the oldest generated block $b_{j', t^*} \in C_{j'}(b_{v,t})$ where
\begin{align}
    t^* = \min \{ t' \;|\;  b_{j', t'} \in  C_{j'}(b_{v,t})\}.
    \label{eq_t_star}
\end{align}
%
%
After that, node $j'$ transmits a {\em RPY\_CHILD} message with  $b^h_{j', t^*}$ to  validator $i$. 
\begin{algorithm}
    \KwInput{$H(b^h_{v,t})$, $\mathcal{S}_{j'}$} 
    \KwOutput{ $b^h_{j', t^*}$} 
    //*** Listen for incoming connections ***// \\
    \If{Msg = Receive(`REQ\_CHILD') }{
    $H(b^h_{v,t}) = GetHash(Msg) $ \\
    
    %
    $C_{j'}(b_{v,t}) = \{b_{j', t'} \;|\;  H(b^h_{v,t}) \in b_{j', t'}, b_{j', t'} \in \mathcal{S}_{j'}\}$ \\
    Set $t^* = \min \{ t' \;|\;  b_{j', t'} \in  C_{j'}(b_{v,t})\}$ \\
    //*** Send message ***// \\
    Send(`RPY\_CHILD', $b^h_{j', t^*}$, $v$) \\
    }
\caption{Responder}
\label{algo_responder}
\end{algorithm} 

%
\subsection{Discussion}
Here, we provide some discussion on security concerns at the application layer.
For attacks/concerns at other layers of the protocol stack, we refer the reader to well-known attacks and defenses in the literature, e.g., \cite{8897627}.
%
%
We note that the registration and authentication of nodes are beyond the scope of this paper.  The reader is referred to works such as~\cite{WSNTrust} for authentication/security protocols.  
Lastly, we remind the reader that nodes have a public-private key pair, and they are aware of the topology and each other's public key.
%
\subsubsection{Malicious Nodes}
\label{sec_50}
Recall that in Section \ref{sec_pop_detail}, if validator $i$ does not receive or receive an invalid {\em RPY\_CHILD} message from $j' \in \mathcal{N}(j)$ within the timeout $\tau$, it requests a {\em RPY\_CHILD} message from  another node in $\mathcal{N}(j)$. 
If all nodes in $\mathcal{N}(j)$ fail to reply, validator $i$ creates a path that does not involve node $j'$; see Algorithm \ref{algo_validator}, lines \ref{algo_3_line_26}-\ref{algo_3_line_30}. 
Consider Fig.~\ref{fig_attack}.  Assume validator $E$ aims to verify a block at verifier $K$.  Malicious nodes are denoted by gray circles.  In this example, to ensure consensus, the resulting path $\mathcal{P}_E$ must contain at least seven nodes. 
After attempting to construct the green and blue path, the validator eventually constructs the red path.

\begin{figure}[t]
\includegraphics[width= 0.9\linewidth ]{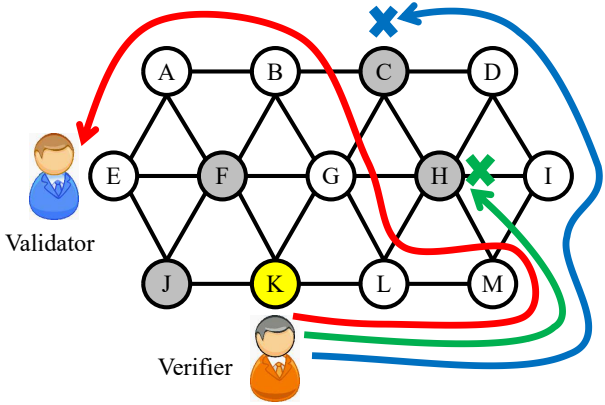}
\centering
 \caption{An example of PoP constructing a path around malicious nodes (gray circles). The yellow circle denotes the node with a block to be verified. The validator first constructs the green path, which terminates at malicious node $H$.  It then sends a request to $I$, which results in the blue path. Unfortunately, this blue path terminates at malicious node $C$.   As the validator has exhausted all neighbors of nodes $I$ and $D$, it then requests from the neighbor of node $L$, which results in the red path. }
\label{fig_attack}
\end{figure}

\subsubsection{Majority Attack}
\label{sec_51} 
In conventional blockchains, assuming proof-of-work (PoW), attackers with the majority of the computational power, i.e., 51\%, are able to corrupt blockchain content. 
%
This is because they are able to construct the longest chain containing invalid data, which is then accepted by participants.
In contrast, in 2LDAG, a node will never replace its blocks from those of other nodes.   
Specifically, nodes maintain their own blocks and they are not required to maintain a globally consistent 'chain'.
%

An attacker may cause all nodes in $\mathcal{R}_i$ or the path $\mathcal{P}_i$ to contain many malicious nodes. 
In this case, a solution is to increase the value of $\gamma$, meaning more nodes will be required to reach consensus.
%

\subsubsection{Sybil Attack}
A Sybil attacker creates fake nodes, which can then be used to launch a 51\% attack for example. 
Recall that a validator, say $i$, constructs a set $\mathcal{R}_i$ with unique nodes. This means a Sybil attacker that replicates the same malicious nodes does not have an impact on the consensus process.
Moreover, all nodes store the topology and they are able to use the public key of nodes to verify their identity.   Both facts prevent a Sybil attacker from adding nodes undetected. 


%
\subsubsection{Man-in-the-Middle Attack}
We only consider an attacker that aims to corrupt data block headers and/or body.   For an eavesdropping attack, a sender can encrypt its messages using its private key or derive a symmetric key based on its private key; see~\cite{SPINS} for an example.  An attacker could also drop messages between a source and destination pair.  To this end, we can employ multiple paths to bypass an attacker~\cite{SROUTING}. 

In 2LDAG, data block corruption is avoided as follows.  The data block header sent from a node, say $j$, is signed using its private key $sk_j$. A validator can validate this data block using the corresponding public key $pk_j$. This prevents manipulation by an intermediate node. 
Further, blocks are inter-connected in a DAG.  This means any change to a block will impact the digest of subsequent block(s) in the DAG.
%

%
\subsubsection{Denial of Service (DoS) Attack}
A malicious node may attempt to generate many digests and floods its neighbors.   In 2LDAG, a node needs to solve a puzzle that causes it to take seconds to generate a block.  Hence, a malicious is not able to generate a large number of blocks within a short time. This is the same strategy as IOTA \cite{popov2018tangle}.   
Further, a node may ban a neighbor that generates blocks quicker than the expected time to solve the puzzle. 
We note that such a malicious node will only affect its direct neighbors.  This is because the digests from a node are not flooded throughout a network.
%

%
%
%
In 2LDAG, nodes process the following messages: 1) {\em REQ\_CHILD}, 2) {\em RPY\_CHILD}, and 3) digest.
These messages and digests can include a nonce to avoid replay attacks.  Further, verifiers can authenticate a validator before replying to a REQ\_CHILD message.   Similarly, nodes only receive/forward/process digests, REQ\_CHILD and/or RPY\_CHILD messages from authenticated nodes.  Otherwise, they can safely discard these digests/messages.  This ensures the DAG and PoP involve only honest nodes.
%
%
%

%
Lastly, if a node is captured, hence becoming malicious, any changes to its data blocks will be detected by PoP.  This is because the digest of its data blocks will not be consistent with that of its neighbors.

\subsubsection{Selfish Attack\label{sec_selfish}}
In PoP, a selfish node may never reply to a {\em REQ\_CHILD} message. 
To prevent this, nodes can be equipped with a penalty mechanism.  Each node maintains a blacklist consisting of nodes that do not reply to a {\em REQ\_CHILD} message, either due to selfish behavior, disconnection or malicious. 
Then the block generated by the selfish node will not be verified by other nodes. The nodes in the blacklist will be removed after it helps transmit a certain number of blocks. Therefore, to show their willingness to participate after re-connection, nodes will actively transmit their data blocks.

    



%
\section{Performance Analysis}
In this section, we analyze the storage and message overhead of 2LDAG and PoP.  Briefly, our goal is to quantify the maximum amount of data stored by node $i$ at time $t$, and the number of messages emitted and received by a validator during verification.
\begin{Proposition}
\label{pop_1}
    The total number of data blocks at time $t$ is $\sum_{j \in \mathcal{V}} \bigl\lfloor \frac{t r_j}{C} \bigr\rfloor $. 
\end{Proposition}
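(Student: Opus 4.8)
The plan is to count, for each node separately, the number of data blocks it has produced by time $t$, and then sum over all nodes. The key structural fact I would rely on is that each node stores only its own data blocks and indexes them as $b_{j,1}, b_{j,2}, \ldots$; consequently the sets $\mathcal{S}_j$ are pairwise disjoint and $\mathcal{B} = \bigcup_{j \in \mathcal{V}} \mathcal{S}_j$ is a disjoint union, so the total block count is $\sum_{j \in \mathcal{V}} |\mathcal{S}_j|$. This reduces the claim to computing $|\mathcal{S}_j|$ at time $t$.

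To evaluate $|\mathcal{S}_j|$, I would invoke the generation rule stated in Section~\ref{phy_lay}: node $j$ produces data at rate $r_j$ bit/s and seals a data block once its body reaches the fixed size $C$ bits. Hence node $j$ accumulates enough data for a new block exactly every $C/r_j$ seconds, so it emits its $k$-th (non-genesis) data block at generation time $kC/r_j$. The number of such blocks produced on or before time $t$ is therefore the number of positive integers $k$ satisfying $kC/r_j \le t$.

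The counting step is then routine: the largest such integer is $\lfloor t r_j / C \rfloor$, because $kC/r_j \le t \iff k \le t r_j / C$ and $k$ ranges over the positive integers. Summing over $j \in \mathcal{V}$ yields $\sum_{j \in \mathcal{V}} \lfloor t r_j / C \rfloor$, as claimed.

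The main point requiring care is the bookkeeping of the genesis block $b_{j,0}$, which every node creates at $t = 0$ yet which the stated formula does not count (it evaluates to $0$ at $t=0$). I would resolve this by adopting the index convention under which $\lfloor t r_j / C \rfloor$ counts only the data-bearing blocks generated after initialization, treating each genesis block as a separate bootstrap object outside the body-filling schedule. Aside from this indexing convention --- and the implicit assumption that $r_j$ and $C$ are constant over $[0,t]$, so that the inter-block spacing is exactly $C/r_j$ --- the argument is a direct deterministic count, and I do not expect any genuine obstacle.
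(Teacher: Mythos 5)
Your proposal is correct and follows essentially the same route as the paper's own proof: count blocks per node from the generation interval $C/r_j$ and sum over $\mathcal{V}$. In fact your version is more careful than the paper's, which writes $\lceil t r_j / C \rceil$ per node before concluding with the floor (an internal inconsistency) and never addresses the genesis blocks $b_{j,0}$ that your indexing convention explicitly excludes.
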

\begin{proof}
    Recall that a node $j$ generates a data block every $\frac{C}{r_j}$ seconds. Hence, node $j$ has generated $\lceil \frac{t r_j}{C} \rceil$ data blocks at time $t$. Therefore, the total number of data blocks is $\sum_{i \in \mathcal{V}} \lfloor  \frac{t r_j}{C} \rfloor $.
\end{proof}
Recall that nodes, say $i$, maintain set $\mathcal{H}_i$ in order to reduce communication cost.  The following proposition bounds the size of $\mathcal{H}_i$. 
\begin{Proposition}
\label{pop_2}
    At time $t$, the size of $\mathcal{H}_i$ is upper bounded by $\frac{t(f_{c} + f^H |\mathcal{V}|)}{C} \sum_{j \in \mathcal{V}\backslash\{i\}} r_j$ bits. 
\end{Proposition}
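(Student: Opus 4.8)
The plan is to bound the size of $\mathcal{H}_i$ as the product of two quantities: the maximum size of any single block header, and the maximum number of distinct headers $\mathcal{H}_i$ can hold. Both factors follow from the block structure of Section~\ref{sec_data_block} and from Proposition~\ref{pop_1}, so the argument is essentially a careful counting exercise.

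First I would bound the size of one header. From the field sizes in Section~\ref{sec_data_block}, the header of a block generated by node $j$ consists of the constant part $f_c = f^v+f^t+f^H+f^n+f^s$ together with the \emph{Digests} field of size $f^H(|\mathcal{N}(j)|+1)$; hence it occupies $f_c + f^H(|\mathcal{N}(j)|+1)$ bits. Since any node has at most $|\mathcal{V}|-1$ neighbors, i.e. $|\mathcal{N}(j)| \le |\mathcal{V}|-1$, every header occupies at most $f_c + f^H|\mathcal{V}|$ bits.

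Next I would bound the number of headers. By construction (line~\ref{algo_3_33} of Algorithm~\ref{algo_validator}), $\mathcal{H}_i$ is populated only with headers of blocks lying on verified paths $\mathcal{P}_i$. These are blocks of nodes other than $i$: node $i$ acts as validator tracing the blocks of other nodes, and it already retains its own blocks and their headers in $\mathcal{S}_i$, so they need not be (re)counted in $\mathcal{H}_i$. Because $\mathcal{H}_i$ is a set, each distinct block contributes at most one header, so the number of headers cannot exceed the total number of blocks generated by $\mathcal{V}\backslash\{i\}$. By Proposition~\ref{pop_1}, this count is $\sum_{j\in\mathcal{V}\backslash\{i\}} \lfloor tr_j/C\rfloor \le \sum_{j\in\mathcal{V}\backslash\{i\}} tr_j/C$. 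Multiplying the two bounds yields the stated estimate $\frac{t(f_c+f^H|\mathcal{V}|)}{C}\sum_{j\in\mathcal{V}\backslash\{i\}} r_j$.

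The main obstacle I anticipate is the counting step rather than the arithmetic: one must justify that $\mathcal{H}_i$ never accumulates more than one header per block and that node $i$'s own blocks are excluded from the count. The per-block uniqueness follows immediately from $\mathcal{H}_i$ being a set, while the exclusion of $i$ follows from the semantics of PoP, where the validator traces the blocks of other nodes. I would make both observations explicit, after which the bound is immediate. It is worth noting that the estimate is loose in two respects — it replaces each floor by the raw ratio and assigns every header the worst-case fully-connected degree — so the result is genuinely an upper bound rather than an equality.
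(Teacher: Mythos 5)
Your proof is correct and follows essentially the same route as the paper's: bound each header by $f_c + f^H|\mathcal{V}|$ via $|\mathcal{N}(j)| \le |\mathcal{V}|-1$, count at most $\sum_{j \in \mathcal{V}\backslash\{i\}} \lfloor t r_j / C \rfloor$ headers via Proposition~\ref{pop_1}, drop the floors, and multiply. The only difference is cosmetic — the paper keeps the per-node header size $f_c + f^H(|\mathcal{N}(j)|+1)$ inside the sum before bounding, whereas you bound the two factors separately first, and you make explicit the worst-case assumptions (one header per block, node $i$'s own blocks excluded) that the paper compresses into its opening sentence.
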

\begin{proof}
    In the worst case, each node $i$ needs to store the header of all blocks in $\mathcal{H}_i$.  Let $f(\mathcal{H}_i)$ denote the size of $\mathcal{H}_i$, using Eq. \eqref{equ_size}-\eqref{equ_size_2} and Proposition-\ref{pop_1}, we have
    \begin{align}
        f(\mathcal{H}_i) & = \sum_{j \in \mathcal{V}\backslash\{i\}} \lfloor  \frac{t r_j}{C} \rfloor  \times (f_j - C) \nonumber \\
        & = \sum_{j \in \mathcal{V}\backslash\{i\}} \lfloor  \frac{t r_j}{C} \rfloor  \times (f_{c} + f^H (|\mathcal{N}(j)|+1)).
    \end{align}
    Note that we have $|\mathcal{N}(j)| \leq |\mathcal{V}|-1$ and $\lfloor  \frac{t r_j}{C} \rfloor \le \frac{t r_j}{C} $. Therefore, the above equation can be rewritten as 
    \begin{align}
        f(\mathcal{H}_i) & \le \sum_{j \in \mathcal{V}\backslash\{i\}}  \frac{t r_j}{C}   \times (f_{c} + f^H |\mathcal{V}|) \nonumber \\
        & = \frac{t(f_{c} + f^H |\mathcal{V}|)}{C} \sum_{j \in \mathcal{V}\backslash\{i\}} r_j. \label{eq_pop_2}
    \end{align}
    This ends the proof. 
\end{proof}

Hence, we have the following proposition for node storage.
\begin{Proposition}
    At time $t$, the total storage at node $i$ is upper bounded by $t r_i + \frac{t(f_{c} + f^H |\mathcal{V}|)}{C} \sum_{j \in \mathcal{V}} r_j$ bits. 
\end{Proposition}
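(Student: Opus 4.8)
The plan is to decompose the total storage at node $i$ into the space occupied by its own data blocks, i.e.\ the set $\mathcal{S}_i$, and the space occupied by the verified headers it caches in $\mathcal{H}_i$, then to bound each piece and add them. First I would note that node $i$ stores \emph{complete} data blocks (header plus body) for every block it generates, whereas for every other node it retains only block \emph{headers} in $\mathcal{H}_i$. The $\mathcal{H}_i$ part is already handled by Proposition~\ref{pop_2}, so the work is to bound the contribution of $\mathcal{S}_i$ and then recombine.

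Next I would bound the storage of $\mathcal{S}_i$ by splitting each own block into its body and its header. By Proposition~\ref{pop_1}, node $i$ has generated $\lfloor t r_i / C \rfloor$ blocks by time $t$. Each body has constant size $C$, so the total body contribution is $\lfloor t r_i / C \rfloor \cdot C \le t r_i$, which supplies the first term of the claimed bound. Each own header has size $f_i - C = f_c + f^H(|\mathcal{N}(i)| + 1)$ by Eq.~\eqref{equ_size}; using $|\mathcal{N}(i)| \le |\mathcal{V}| - 1$ and $\lfloor t r_i / C \rfloor \le t r_i / C$ (the same inequalities as in Proposition~\ref{pop_2}), the header contribution of node $i$'s own blocks is at most $\frac{t(f_c + f^H|\mathcal{V}|)}{C}\, r_i$.

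Finally I would add this own-header bound to the bound from Proposition~\ref{pop_2}. Since that proposition sums over $\mathcal{V}\backslash\{i\}$ and the own-header term contributes exactly the missing $j=i$ summand $\frac{t(f_c + f^H|\mathcal{V}|)}{C}\, r_i$, the two combine into $\frac{t(f_c + f^H|\mathcal{V}|)}{C}\sum_{j\in\mathcal{V}} r_j$. Together with the $t r_i$ body term, this is precisely the stated upper bound.

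The step I expect to be the main obstacle — really the only conceptual point, since the rest is a routine application of Eq.~\eqref{equ_size} and the floor inequality — is the bookkeeping: recognizing that node $i$'s \emph{own} headers are \emph{not} counted in Proposition~\ref{pop_2} (whose sum deliberately excludes $i$), and that it is exactly these own headers that promote the summation index from $\mathcal{V}\backslash\{i\}$ up to all of $\mathcal{V}$, while the block bodies account separately for the $t r_i$ term. Getting that split right is what makes the two previously proved propositions glue together cleanly.
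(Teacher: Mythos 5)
Your proposal is correct and takes essentially the same route as the paper's proof: both bound $f(\mathcal{S}_i)$ using Proposition~\ref{pop_1}, Eq.~\eqref{equ_size}, $|\mathcal{N}(i)|\le|\mathcal{V}|-1$ and $\lfloor t r_i/C\rfloor \le t r_i/C$ to obtain $t r_i + \frac{t(f_c+f^H|\mathcal{V}|)}{C}\, r_i$, then add the Proposition~\ref{pop_2} bound on $f(\mathcal{H}_i)$ so that the own-header term supplies the missing $j=i$ summand. The only cosmetic difference is that you separate body and header contributions up front, whereas the paper multiplies the block count by the full block size $f_i$ and splits off the $t r_i$ term algebraically at the end.
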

\begin{proof}
    From Proposition-\ref{pop_1}, we see that the number of blocks stored by node $i$ is $|\mathcal{S}_i|  = \lfloor  \frac{t r_j}{C} \rfloor$. 
    %
    Therefore, the bit-size of $\mathcal{S}_i$, denoted by $f(\mathcal{S}_i)$, is 
    \begin{align}
        f(\mathcal{S}_i) & = \lfloor  \frac{t r_j}{C} \rfloor \times f_i \nonumber \\
        & = \lfloor  \frac{t r_i}{C} \rfloor \times ( f_c + f^H (|\mathcal{N}(j)|+1) + C) \nonumber \\ 
        & \le \frac{t r_i}{C} (f_{c} + f^H |\mathcal{V}| + C) \nonumber \\ 
        & = t r_i + \frac{t(f_{c} + f^H |\mathcal{V}|)}{C} \times r_i. \label{eq_pop_3_1}
    \end{align}
    Using Eq. \eqref{eq_pop_2} from Proposition-\ref{pop_2} and Eq. \eqref{eq_pop_3_1}, the total storage at node $i$ is bounded as follows:
    \begin{align}
        f(\mathcal{S}_i) + f(\mathcal{H}_i) & \le t r_i + \frac{t(f_{c} + f^H |\mathcal{V}|)}{C} \Big(r_i +  \sum_{j \in \mathcal{V}\backslash\{i\}} r_j\Big) \nonumber \\
        & = t r_i + \frac{t(f_{c} + f^H |\mathcal{V}|)}{C} \sum_{j \in \mathcal{V}} r_j.
    \end{align}
    This ends the proof.
\end{proof}

From Algorithm~\ref{algo_validator}, we analyze its {\em message overhead}, which is the  number of messages sent/received by a validator. We first have the following proposition that quantifies the message overhead lower bound.

%
\begin{Proposition}
    A validator node $i$ emits and receives at least $2(\gamma + 1)$ messages to reach consensus when $\mathcal{H}_i = \emptyset$.
\end{Proposition}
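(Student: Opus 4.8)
The plan is to lower-bound the validator's message count by tracing the \emph{cheapest} possible execution of Algorithm~\ref{algo_validator} under the hypothesis $\mathcal{H}_i = \emptyset$, and identifying which message exchanges are unavoidable. The two structural facts that drive the argument are: (i) consensus is declared only when $|\mathcal{R}_i| \ge \gamma + 1$, and (ii) the validator initializes $\mathcal{R}_i = \{j\}$, so $|\mathcal{R}_i| = 1$ at the start of the main loop. Consequently at least $\gamma$ further distinct nodes must be inserted into $\mathcal{R}_i$ before the loop can break.

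First I would observe that when $\mathcal{H}_i = \emptyset$, the call to TPS on line~\ref{algo_3_9} returns its inputs unchanged, because the guard $\exists\, b^h_{j', t'} \in \mathcal{H}_i$ is never satisfied; hence TPS contributes no new members to $\mathcal{R}_i$. Every one of the $\gamma$ required insertions must therefore pass through the WPS branch, and from lines~\ref{algo_3_16}--\ref{algo_3_18} each successful insertion costs the validator exactly one emitted \emph{REQ\_CHILD} and one received \emph{RPY\_CHILD} message, i.e.\ two messages, with no way to add a node to $\mathcal{R}_i$ without this exchange. This accounts for at least $2\gamma$ messages. Next I would charge the initial block-retrieval step (the \emph{Request} call that fetches $b_{j,t}$ from the verifier): it is one emitted request and one received reply, contributing two more messages, and it is indispensable since both the Merkle-root check and the assignment $\mathcal{R}_i = \{j\}$ depend on its result. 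Summing the unavoidable costs yields $2 + 2\gamma = 2(\gamma + 1)$, and because timeouts, invalid replies, and the rollback logic of lines~\ref{algo_3_line_26}--\ref{algo_3_line_30} can only introduce additional \emph{REQ\_CHILD}/\emph{RPY\_CHILD} rounds, this quantity is a genuine lower bound.

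The main obstacle will be justifying that \emph{no} insertion into $\mathcal{R}_i$ can be ``free'' when $\mathcal{H}_i = \emptyset$: I must rule out any execution path that enlarges $\mathcal{R}_i$ without a paired message exchange. This reduces to confirming that TPS is inert on an empty trust set and that the WPS branch is the sole remaining mechanism that grows $\mathcal{R}_i$, so that the $\gamma$ mandatory insertions are in bijection with $\gamma$ message pairs. A minor point to handle cleanly is the bookkeeping convention that counts the initial retrieval as two messages rather than absorbing it into the loop, which the statement's phrasing ``emits and receives'' makes natural.
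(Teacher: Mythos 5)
Your proof is correct, and it arrives at the bound by a slightly different---and in fact tighter---decomposition than the paper's. The paper's own proof asserts that the validator ``needs to validate at least $\gamma+1$ verifying blocks,'' each costing one \emph{REQ\_CHILD}/\emph{RPY\_CHILD} pair, and multiplies to get $2(\gamma+1)$; read against Algorithm~\ref{algo_validator} this is slightly loose, since $\mathcal{R}_i$ is initialized to $\{j\}$ before the main loop, so only $\gamma$ further insertions---hence only $\gamma$ exchange pairs---are forced by the consensus condition $|\mathcal{R}_i|\ge\gamma+1$. You charge exactly those $2\gamma$ messages to the loop and recover the missing two from the initial \emph{Request}/reply that fetches $b_{j,t}$ from the verifier, which the statement's phrasing ``emits and receives'' legitimately covers and which is genuinely indispensable (the Merkle-root check and the initialization $\mathcal{R}_i=\{j\}$ both depend on it). Your two supporting observations also close gaps the paper leaves implicit: TPS is inert throughout the run because $\mathcal{H}_i$ is only augmented after consensus is reached (line~\ref{algo_3_33}), so the WPS branch is the sole mechanism that grows $\mathcal{R}_i$, putting insertions in bijection with message pairs; and timeouts, invalid replies, and the rollback logic of lines~\ref{algo_3_line_26}--\ref{algo_3_line_30} only ever add messages, so the count is a true lower bound. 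What the paper's version buys is brevity; what yours buys is an accounting that tracks the pseudocode faithfully and makes explicit why the constant is $\gamma+1$ rather than $\gamma$---the one convention you should state up front is that the initial block retrieval counts as one emitted and one received message.
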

\begin{proof}
    When $\mathcal{H}_i = \emptyset$, the validator or node $i$ is not able to reach consensus using its stored information. This means it needs at least one {\em REQ\_CHILD} and {\em RPY\_CHILD} message exchange in order to verify block $b_{v, t}$.     
    Moreover, it needs to validate at least $\gamma + 1$ verifying blocks, and add disjoint nodes to $\mathcal{R}_i$, to reach consensus, i.e., $|\mathcal{R}_i| \ge \gamma + 1$. Hence, the validator emits and receives at least $2(\gamma + 1)$ messages to reach consensus when $\mathcal{H}_i = \emptyset$.  This completes the proof.
\end{proof}

\begin{figure}[t]
\includegraphics[width= 0.9\linewidth ]{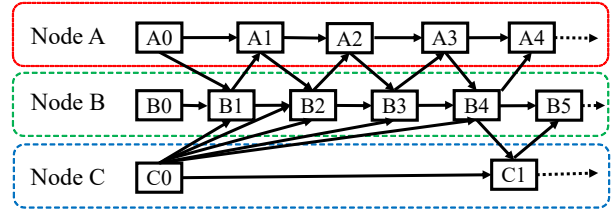}
\centering
\caption{An example of a {\em micro-loop} in PoP. When $r_B \gg r_c$ and $\gamma=2$.  PoP traverses the {\em micro-loop} $\{B2, A2, B3, A3, B4\}$ before reaching block $C1$. 
}
\label{fig_spiral}
\end{figure}

Before discussing the message overhead upper bound incurred by a validator, we first highlight an interesting finding. Consider Fig. \ref{fig_spiral}. 
Assume node $B$ has a much higher data generation rate than node $C$.  As a result, the digest of block $C0$ exists in multiple blocks at node $B$, and block $C1$ only has the digest of block $C0$ and $B4$.
%
Further,  assume $\gamma = 2$. When a validator needs to verify a block, say $B1$, it needs to construct a path $\{B1, A1, B2, A2, B3, A3, B4, C1\}$ that traverse all three nodes. 
Notice that the size of set $\mathcal{R}_i$ does not increase after adding the blocks in set $\{B2, A2, B3, A3, B4\}$ into the path.  This is because they involve nodes that exist in $\mathcal{R}_i$.
%
Note also there is a {\em micro-loop}, namely $\{B1, A1, B2, A2, B3, A3, B4\}$, in the path.
%
%
%
We use $\mathcal{M}$ to denote a set of nodes that are traversed by a micro-loop. 
The number of blocks in a micro-loop is limited by the block generation time interval of a node not in $\mathcal{M}$, say node $C$ in Fig. \ref{fig_spiral}. We then have the following proposition for the maximum number of blocks in $\mathcal{P}^l_i$. 
%
\begin{Proposition}
\label{pop_loop}
    Let $\mathcal{M}$ be a set of nodes that is traversed by a micro-loop, then the number of blocks within a micro-loop is upper bounded by $\sum_{i \in \mathcal{M}} \lfloor \frac{r_i}{\min \{r_j \;|\; j \in \mathcal{V}\backslash\mathcal{M}\}} \rfloor$. 
\end{Proposition}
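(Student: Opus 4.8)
The plan is to bound the number of blocks in a micro-loop by a counting argument that relates the temporal span of the loop to the block-generation interval of the slowest node outside $\mathcal{M}$. The intuition, visible in Fig.~\ref{fig_spiral}, is that a micro-loop arises precisely because the nodes in $\mathcal{M}$ generate blocks much faster than some external node; the path keeps cycling among $\mathcal{M}$-blocks until an external node finally produces a block that lets the path escape. Thus the loop is ``trapped'' within the time window delimited by two consecutive blocks of an external node, and counting how many $\mathcal{M}$-blocks fit into that window yields the bound.

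First I would establish the structural claim that every block traversed inside a micro-loop is generated by a node in $\mathcal{M}$; this follows directly from the definition of $\mathcal{M}$ as the set of nodes traversed by the loop. Next I would identify the time window of the loop. Let $r_{\min} = \min\{r_j \mid j \in \mathcal{V}\setminus\mathcal{M}\}$ be the minimum data-generation rate among external nodes, so that the longest possible inter-block interval of any external node is $C/r_{\min}$ seconds. I would argue that the micro-loop persists only until the external node that provides the exit generates its next block; since that node's inter-block interval is at most $C/r_{\min}$, the entire loop is confined to a window of length at most $C/r_{\min}$.

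Finally I would count. Recall from the proof of Proposition~\ref{pop_1} that node $i$ generates a block every $C/r_i$ seconds, so within a window of length $C/r_{\min}$ it produces at most $\lfloor (C/r_{\min})\, r_i / C \rfloor = \lfloor r_i / r_{\min}\rfloor$ blocks. Summing this count over all nodes $i \in \mathcal{M}$ gives the claimed bound $\sum_{i \in \mathcal{M}} \lfloor r_i/r_{\min} \rfloor$.

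The main obstacle is the middle step: rigorously justifying that the micro-loop is temporally confined to a single inter-block interval of an external node, and that using $r_{\min}$ yields a valid upper bound for every possible exit node. This requires arguing that once any external node emits a block referencing the current $\mathcal{M}$-block, the path leaves the loop, so the loop cannot outlast the longest external inter-block interval $C/r_{\min}$; care is also needed at the window boundaries to ensure the per-node block counts stay bounded by $\lfloor r_i/r_{\min}\rfloor$ rather than exceeding it by one.
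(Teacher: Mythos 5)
Your proposal follows essentially the same route as the paper: it defines the window $t' = C/\min\{r_j \mid j \in \mathcal{V}\backslash\mathcal{M}\}$, counts $\lfloor r_i t'/C\rfloor = \lfloor r_i/r_{\min}\rfloor$ blocks per node $i\in\mathcal{M}$ within that window, and sums over $\mathcal{M}$ to obtain the bound. In fact you are somewhat more careful than the paper, which simply asserts the loop is confined to one inter-block interval of an external node (the step you flag as the main obstacle) without further justification.
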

\begin{proof}
%
%
%
%
Let $t'$ be the maximum block generation time interval of nodes in $ \mathcal{V}\backslash\mathcal{M}$, where
\begin{align}
    t' = \frac{C}{\min \{r_j \;|\; j \in \mathcal{V}\backslash\mathcal{M}\}}. 
\end{align}
Let $S_i(t)$ denote the number of blocks generated by node $i$ within time period $t$, where
\begin{align}
    S_i(t) = \lfloor \frac{t}{C/r_i} \rfloor = \lfloor \frac{r_i t}{C} \rfloor.
\end{align}
Therefore, the maximum number of blocks within a micro-loop is 
\begin{align}
    \sum_{i \in \mathcal{M}} S_i(t') = \sum_{i \in \mathcal{M}} \lfloor \frac{r_i}{\min \{r_j \;|\; j \in \mathcal{V}\backslash\mathcal{M}\}} \rfloor. 
\end{align}
\end{proof}

In terms of message overhead upper bound, we have the following proposition.
%
\begin{Proposition}
    Let nodes be indexed by the descending order of their data generation rate, i.e., $r_1 \ge r_2, \ldots, \ge r_{|\mathcal{V}|}$, and $ r_{\gamma} > r_{\gamma+1}$.  Assume there are no malicious nodes.
    The total  message overhead that a validator emits and receives is upper bounded by $(|\mathcal{V}(v)| + \gamma) (\sum_{j=1}^{j=\gamma} \frac{r_j}{r_{|\mathcal{V}|}} + \gamma+ 1)$.
    %
\end{Proposition}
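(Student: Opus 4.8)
The plan is to factor the total message overhead as a product of (i) the maximum number of blocks that the validator appends to the path $\mathcal{P}_i$ before consensus, and (ii) the maximum number of \emph{REQ\_CHILD}/\emph{RPY\_CHILD} messages the validator emits and receives while advancing $\mathcal{P}_i$ by one block. Since we assume no malicious nodes, every probed neighbour either returns a valid child header or simply has no child pointing at the current verifying block, so the validator never terminates in error and the path grows monotonically until the consensus condition $|\mathcal{R}_i|\ge\gamma+1$ is met. Bounding these two quantities separately and multiplying them will yield the claimed $(|\mathcal{V}(v)|+\gamma)\bigl(\sum_{j=1}^{\gamma} r_j/r_{|\mathcal{V}|}+\gamma+1\bigr)$.

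First I would bound the path length. Each block added to $\mathcal{P}_i$ either introduces a fresh node into $\mathcal{R}_i$ — and exactly $\gamma+1$ such distinct nodes are needed for consensus — or it belongs to a micro-loop through nodes already in $\mathcal{R}_i$, as illustrated in Fig.~\ref{fig_spiral}. To maximise the loop contribution I would invoke Proposition~\ref{pop_loop} with $\mathcal{M}$ chosen as the $\gamma$ fastest nodes: the ordering $r_1\ge\cdots\ge r_{|\mathcal{V}|}$ together with $r_\gamma>r_{\gamma+1}$ guarantees these generate the most blocks between two consecutive blocks of the slowest node $r_{|\mathcal{V}|}$, which is the node that finally breaks the loop. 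This caps the micro-loop at $\sum_{j=1}^{\gamma}\lfloor r_j/r_{|\mathcal{V}|}\rfloor\le\sum_{j=1}^{\gamma} r_j/r_{|\mathcal{V}|}$ blocks, so the total path length is at most $\sum_{j=1}^{\gamma} r_j/r_{|\mathcal{V}|}+\gamma+1$, which is precisely the second factor.

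Next I would bound the per-block cost. To append one block the validator runs WPS on the current verifying node $v$, sends a \emph{REQ\_CHILD} to the selected neighbour, and retries other neighbours whenever the chosen one has no suitable child; in the worst case it probes all $|\mathcal{V}(v)|$ candidate neighbours, while the up-to-$\gamma$ nodes already collected in $\mathcal{R}_i$ may be re-queried along the loop without advancing the count, giving the factor $|\mathcal{V}(v)|+\gamma$. Multiplying the two bounds yields the result. The delicate step — and the main obstacle — is this second factor: I must argue rigorously that the WPS weight rule of Eq.~\eqref{equ_w_j} prevents the validator from re-probing already-counted neighbours more times than the micro-loop structure of Proposition~\ref{pop_loop} permits, and I must convert that proposition's \emph{block} count into an actual count of emitted and received messages rather than mere path length. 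Reconciling the precise meaning of $|\mathcal{V}(v)|$ (the neighbourhood size of the verifying node) with this message bookkeeping is where the argument needs the most care.
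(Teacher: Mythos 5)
Your proposal is correct and follows essentially the same route as the paper's own proof: the identical factorization of the overhead into (path length) $\times$ (messages per verifying block), the identical path-length bound $|\mathcal{P}_i| \le \sum_{j=1}^{\gamma} \frac{r_j}{r_{|\mathcal{V}|}} + \gamma + 1$ obtained by applying Proposition~\ref{pop_loop} with $\mathcal{M}$ taken as the $\gamma$ fastest nodes, and the same final multiplication. The per-block factor you flagged as the delicate step is dispatched in the paper with no finer bookkeeping than yours: it counts at most $|\mathcal{N}(v)|$ emitted \emph{REQ\_CHILD} messages and at most $\gamma+1$ received \emph{RPY\_CHILD} messages per block on the path, then uses $|\mathcal{N}(v)| + \gamma + 1 \le |\mathcal{V}(v)| + \gamma$ (via $|\mathcal{N}(v)| \le |\mathcal{V}|-1$) to reach the stated factor.
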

\begin{proof}
    
    Recall that a validator node $i$ reaches consensus if $|\mathcal{R}_i|\ge \gamma+1$. 
    It needs to construct a path $\mathcal{P}_i$, which may include a micro-loop traversed by at most $\gamma$ nodes. Using Proposition-\ref{pop_loop}, the maximum number of blocks within a micro-loop is $\sum_{j=1}^{j=\gamma} \lfloor \frac{r_j}{r_{|\mathcal{V}|}} \rfloor - \gamma$. 
    Then the length of path $\mathcal{P}_i$ is upper bounded by 
    \begin{align}
        |\mathcal{P}_i| & \le \sum_{j=1}^{j=\gamma} \lfloor \frac{r_j}{r_{|\mathcal{V}|}} \rfloor + \gamma + 1.
    \end{align}
    When a validator validates block $b_{v, t}$ on path $\mathcal{P}_i$, it sends at most $|\mathcal{N}(v)|$ {\em REQ\_CHILD} messages, where $|\mathcal{N}(v)| \le |\mathcal{V}|-1$. 
    %
    In addition, it receives at most $\gamma+1$ {\em RPY\_CHILD} messages, where $\gamma$ of them are invalid as they are sent by malicious nodes. 
    Therefore, the total message overhead of a validator satisfies
    \begin{align}
        (|\mathcal{N}(v)|+ \gamma+1) |\mathcal{P}_i| & \le (|\mathcal{V}(v)| + \gamma) |\mathcal{P}_i| \nonumber \\
        & \le (|\mathcal{V}(v)| + \gamma) (\sum_{j=1}^{j=\gamma} \frac{r_j}{r_{|\mathcal{V}|}} + \gamma+ 1).
    \end{align}
    This ends the proof. 
    
    %
\end{proof}

\begin{figure*}[t]
\centering
\mbox{\hspace*{-1.5em}
\subfigure[C = 0.1 MB\label{storage_1000}]{\includegraphics[width= 0.25\linewidth ]{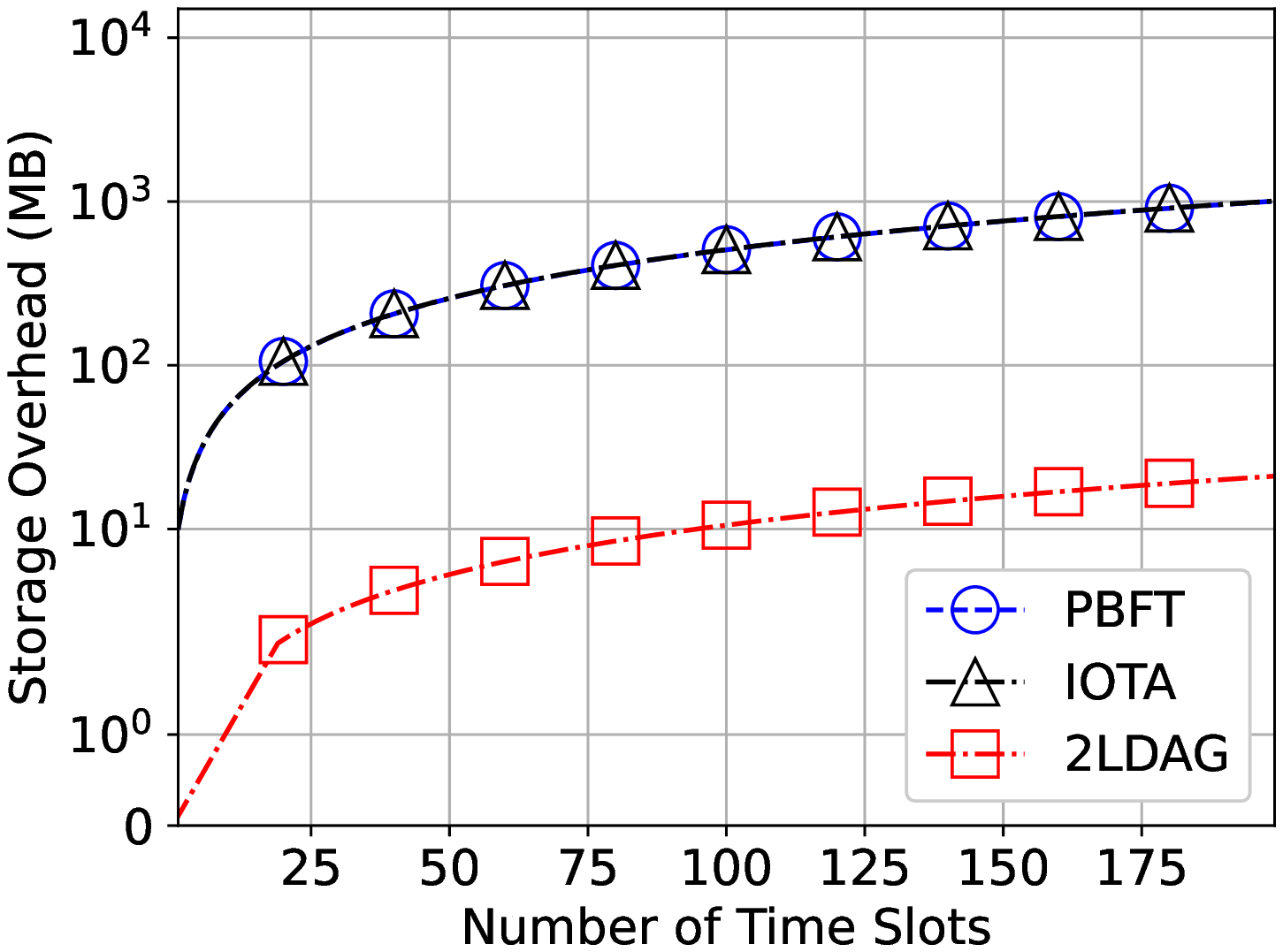}}\hspace*{-1.2em}
\subfigure[C = 0.5 MB\label{storage_3000}]{\includegraphics[width= 0.25\linewidth ]{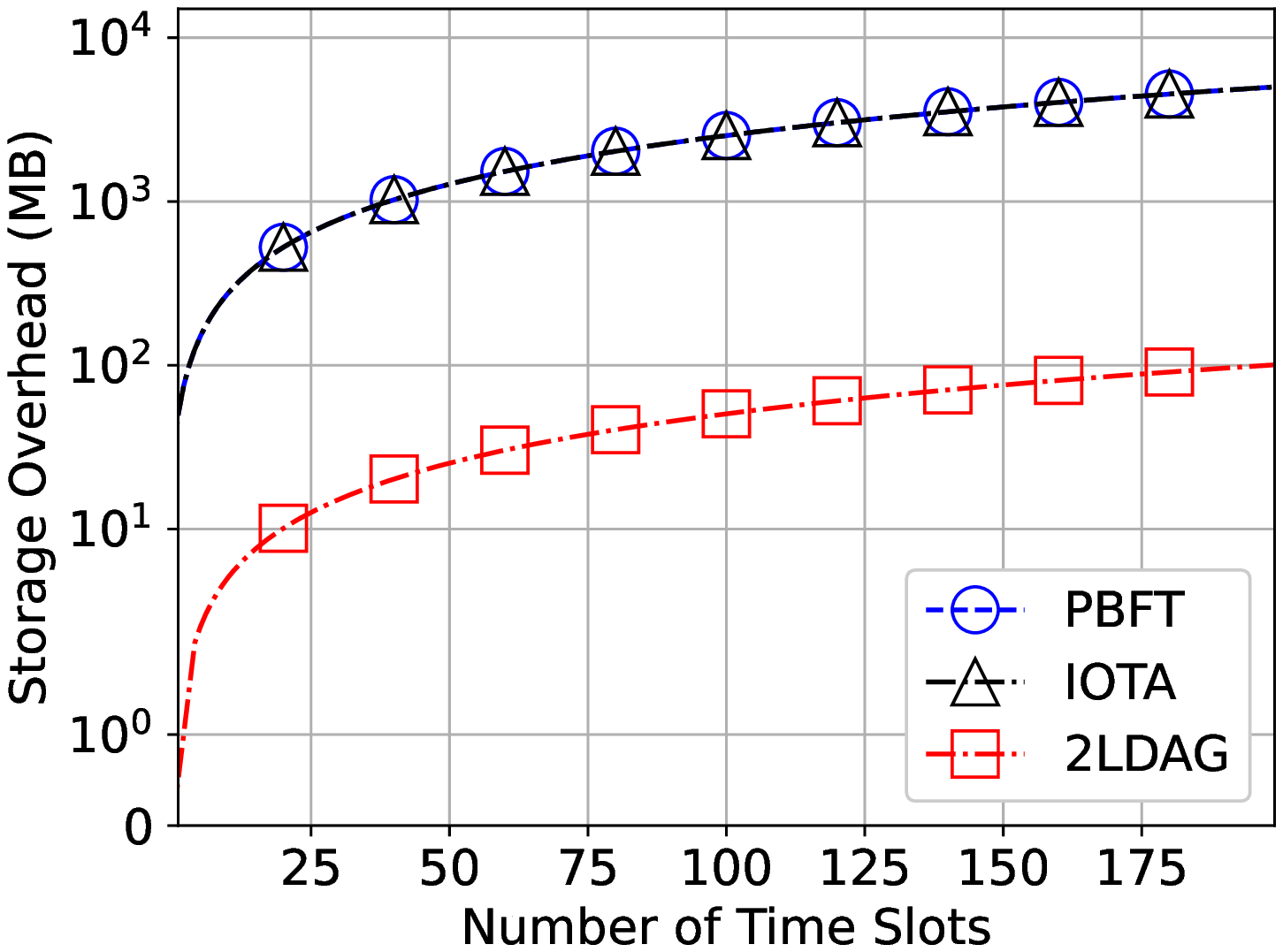}} \hspace*{-1.2em}
}
\subfigure[C = 1 MB\label{storage_5000}]{\includegraphics[width= 0.25\linewidth ]{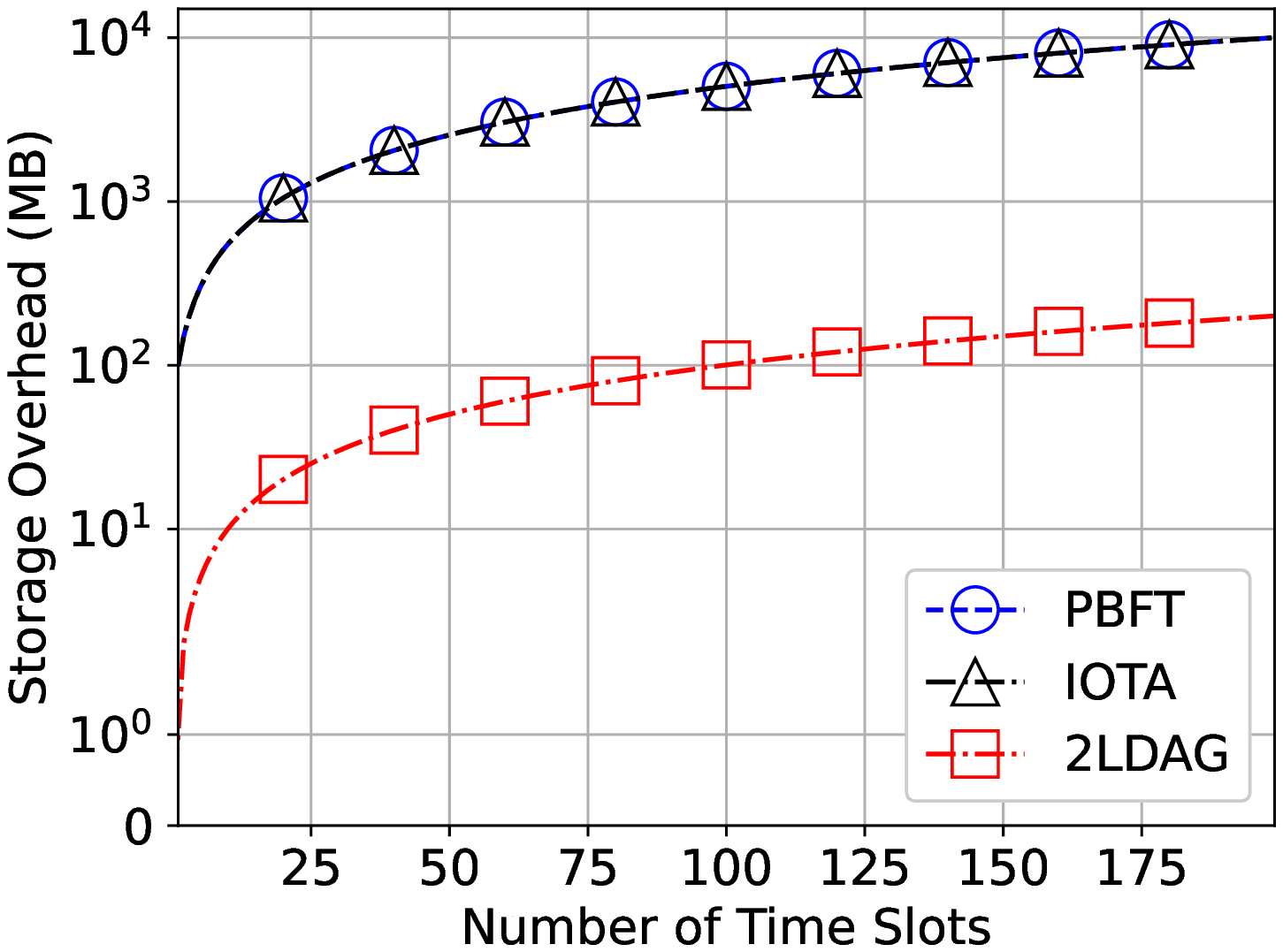}}\hspace*{-1.2em}
\subfigure[CDF of nodes storage when C = 0.5 MB \hspace*{-1.9em}\label{storage_5000_hist}]{\includegraphics[width= 0.25\linewidth ]{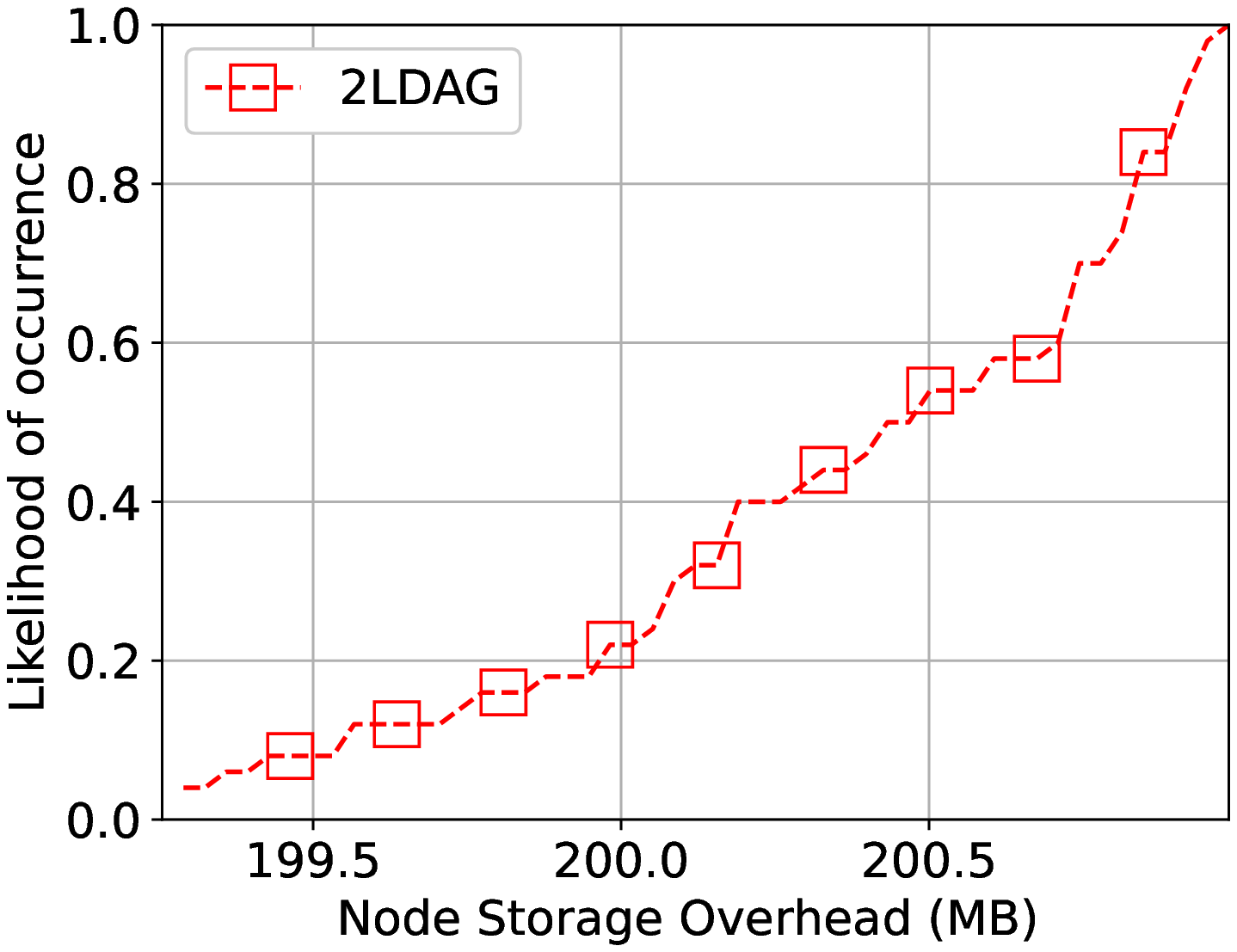}}
\caption{(a)-(c) The average node storage overhead (in log scale) of 2LDAG over PBFT and IOTA when the block body size is $C = \{0.1, 0.5, 1 \}$ MB, (d) the CDF of nodes storage overhead when block body size $C = 0.5$ MB at 200 time slots. All nodes generate one block at each time slot, i.e., $C/r_i=1, \forall i \in \mathcal{V}$. 
}
\label{storage_figs}
\end{figure*}

\section{Numerical Evaluation \label{sec_sim} }
The simulations are carried out on a desktop with an i7-12700 CPU and 32 GB RAM. 
We evaluate three key metrics: (1) {\em storage overhead}, which is the total disk space at a node, (2) {\em communication overhead}, which is the total amount of data a node transmits during block generation,  and (3) {\em time for consensus}, which is the time that PoP uses to find a sufficient path to reach consensus.
We compare the storage and communication overhead of 2LDAG with the PBFT blockchain\cite{castro1999practical}, and the tokenless IOTA blockchain \cite{popov2018tangle}. 
We set the size of Digests and Signature field respectively to $f^H = f^s =  256$ bits.  The Version, Time and Nonce field are set to  $f^v=f^t=f^n=32$ bits.  The physical network consists of 50 wireless IoT nodes in an area of 1000 square meters.
All nodes have a communication range of 50 meters.  To ensure a connected network, we place nodes one by one.  
That is, we start by randomly placing a node in the center of the said area.   A new node is then added to the area with the condition that it is always placed randomly within the communication range of an already deployed node.    

We divide time into slots.  Each node generates at most one block in each time slot. When a node generates a block, it must verify another block that is generated in the past using PoP.   This means a node is a validator when generating a data block.
%
It should be noted that PoP can only verify a block that is generated before $|\mathcal{V}|$  time slots.
This is because to achieve consensus for a block, a validator needs to find a directed path that traverses more than $\gamma  + 1$ physical nodes. This also means a path that passes more than $\gamma  + 1$ time slots. 
\subsection{Storage Overhead}
We first compare the average storage overhead of nodes in 2LDAG versus PBFT and IOTA.
%
We evaluate the following block sizes (in MB): $C=\{0.1, 0.5, 1\}$. 
From Fig.~\ref{storage_1000} to \ref{storage_5000}, we see that the node average storage overhead of 2LDAG is almost two orders of magnitude lower than PBFT and IOTA. 
Note, nodes have different block header sizes because they different number of neighbors. Hence, we show the Cumulative Distribution Function (CDF) of the storage at each node at 200 time slot in Fig.~\ref{storage_5000_hist}.
We see that the storage level at nodes varies from 199 to 201 MB. Therefore, the number of neighbors does not have a significant impact on node storage.


%
\begin{figure*}[t]
\centering
\mbox{\hspace*{-1.5em}
\subfigure[Overall DAG construction and consensus \label{comms_all}]{\includegraphics[width= 0.25\linewidth ]{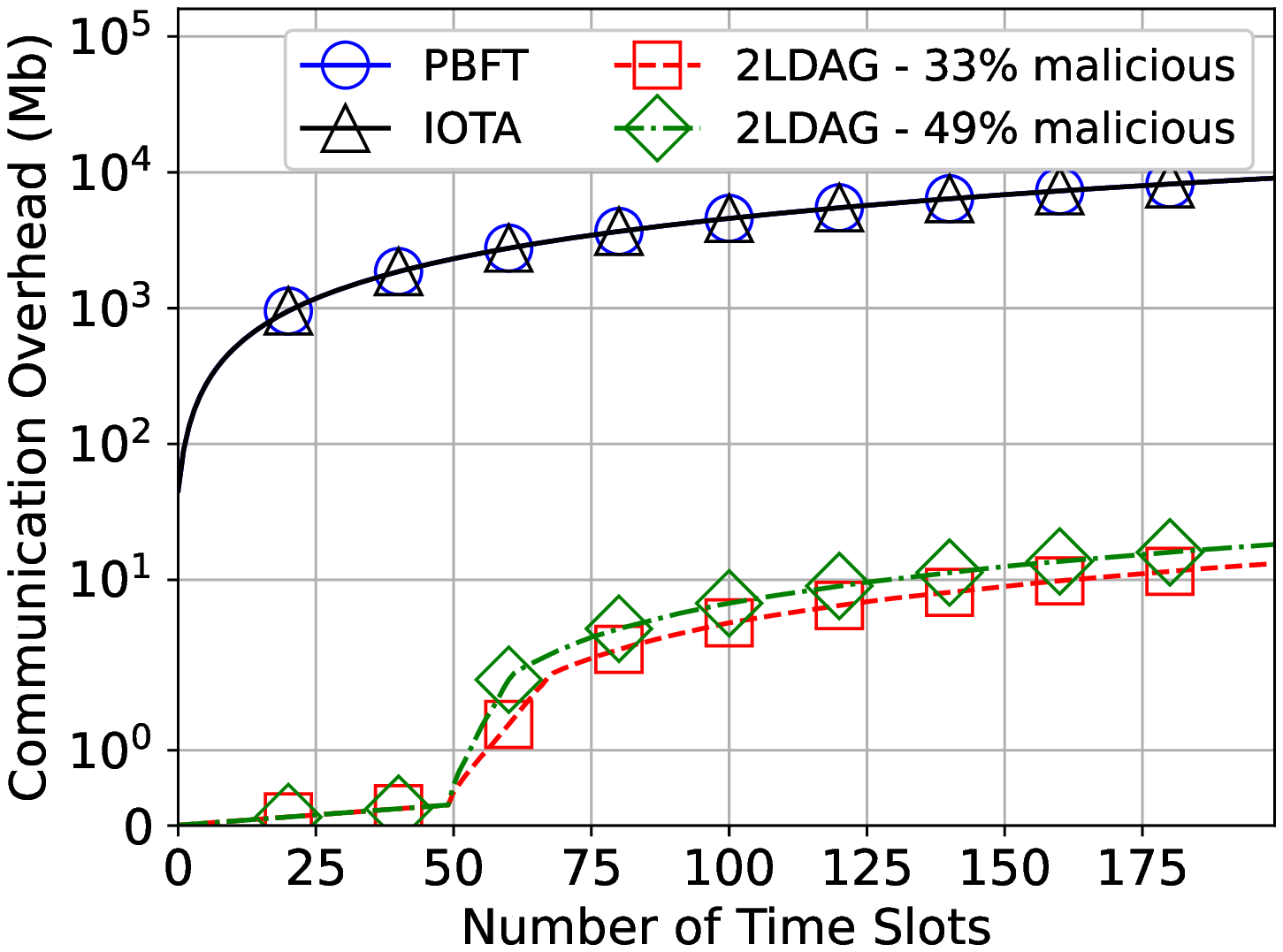}}\hspace*{-1.2em}
\subfigure[DAG construction \label{comms_generate}]{\includegraphics[width= 0.25\linewidth ]{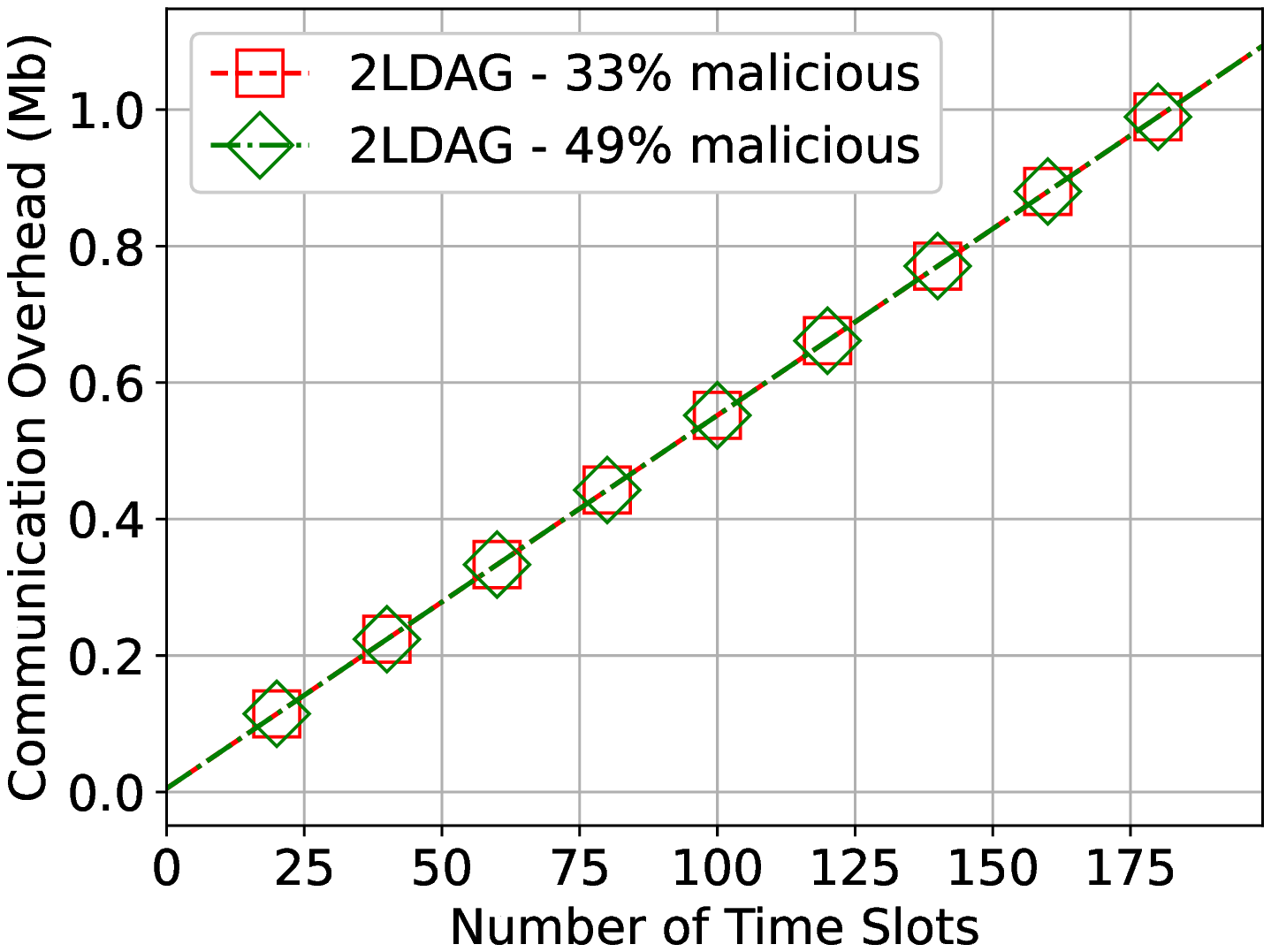}} \hspace*{-1.2em}
}
\subfigure[Consensus \label{comms_consensus}]{\includegraphics[width= 0.25\linewidth ]{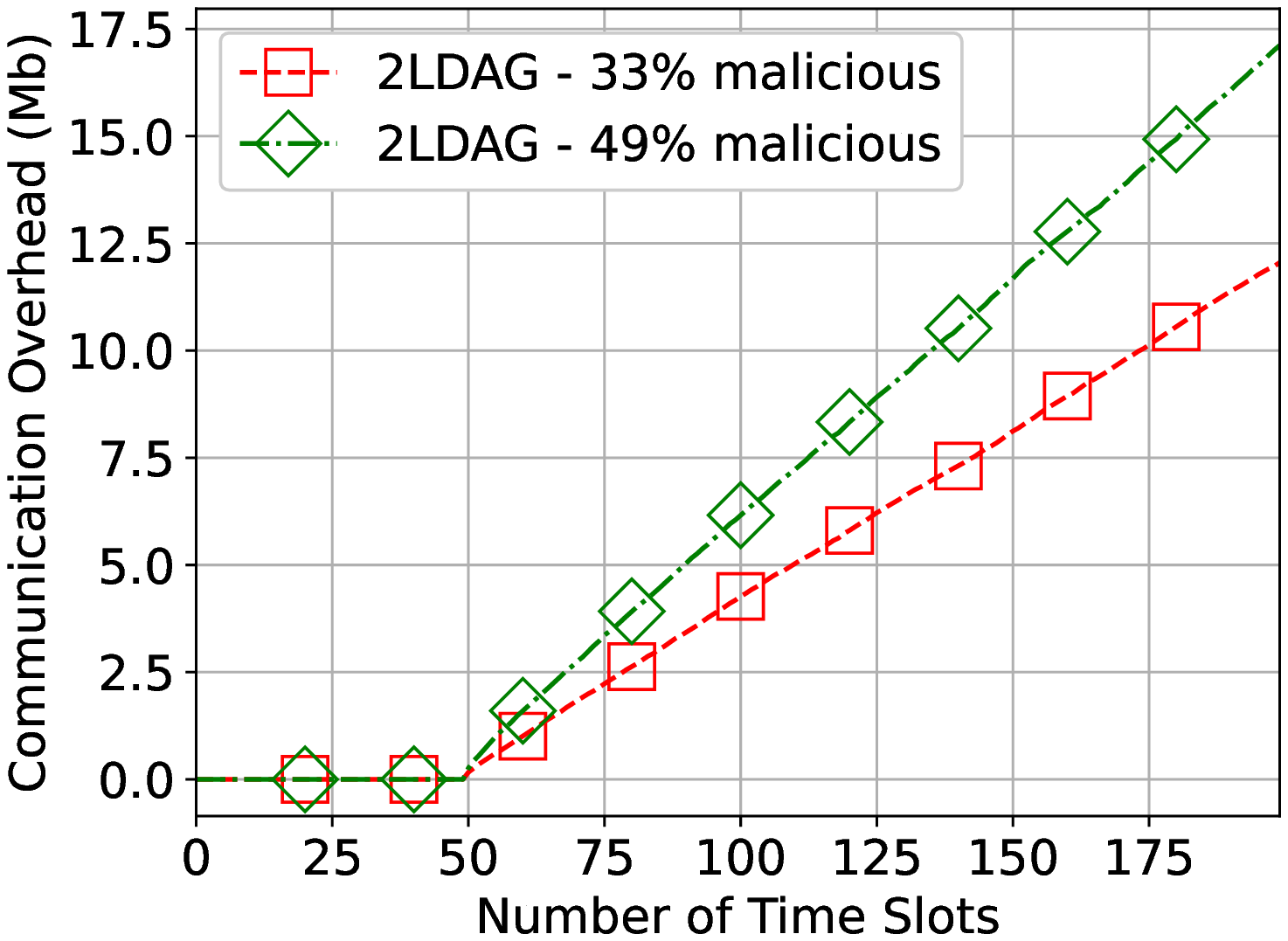}}\hspace*{-1.2em}
\subfigure[CDF of node overall communication overhead \label{comms_hist}]{\includegraphics[width= 0.25\linewidth ]{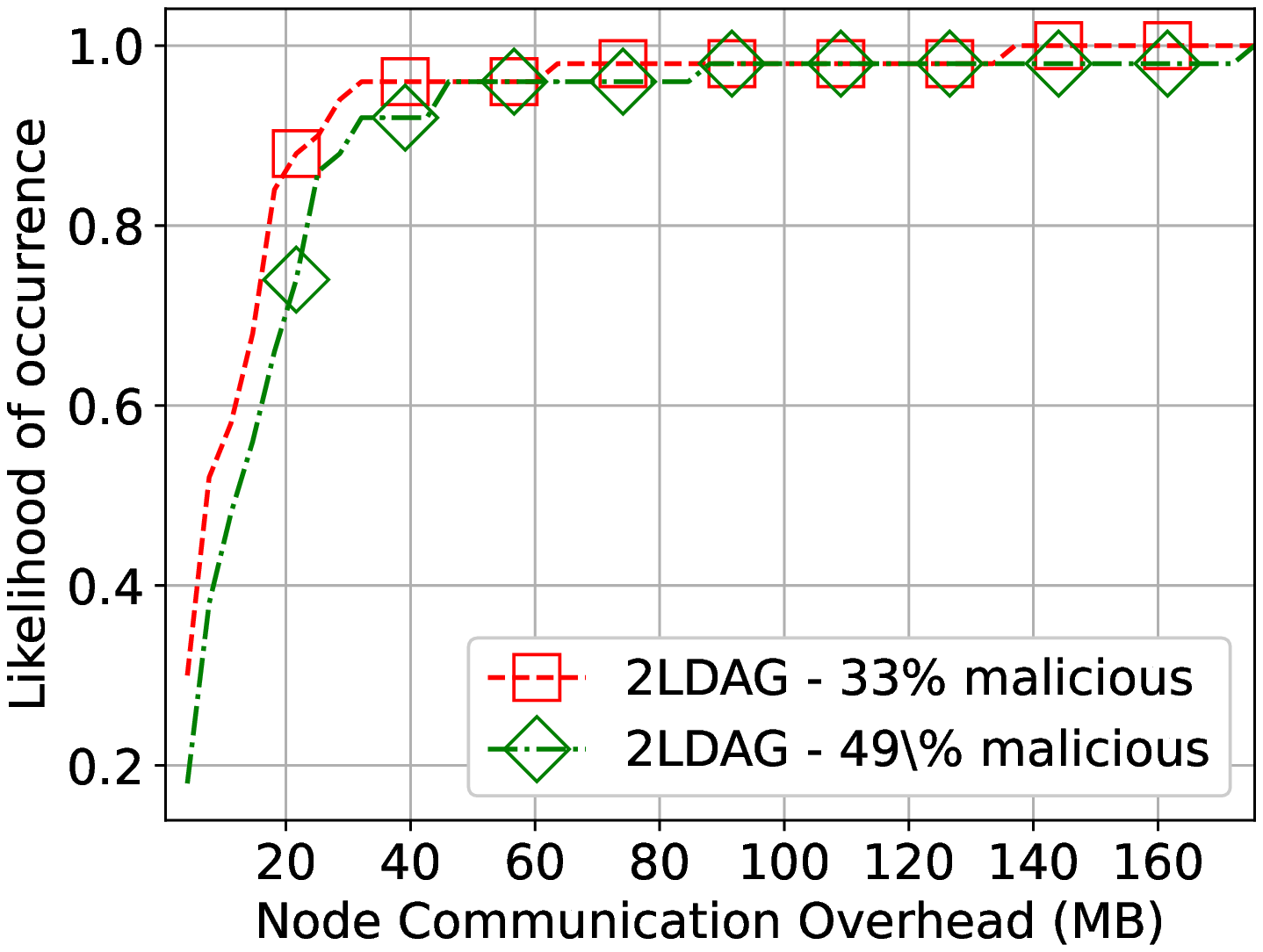}}
\caption{The average node communication overhead of (a) overall DAG construction and consensus, (b) DAG construction, (c) Consensus, and (d) the CDF of overall nodes communication overhead of 2LDAG when the tolerable malicious nodes $\gamma$ are 33\%$|\mathcal{V}|$ and 49\%$|\mathcal{V}|$, respectively. The block body size is $C=0.5$ MB. }
\label{Comms_figs}
\end{figure*}

\begin{figure*}[t]
\centering
\vspace*{-0.5cm}
\mbox{\hspace*{-1.5em}
\subfigure[$\gamma=10$ \label{Tolerable_10}]{\includegraphics[width= 0.25\linewidth ]{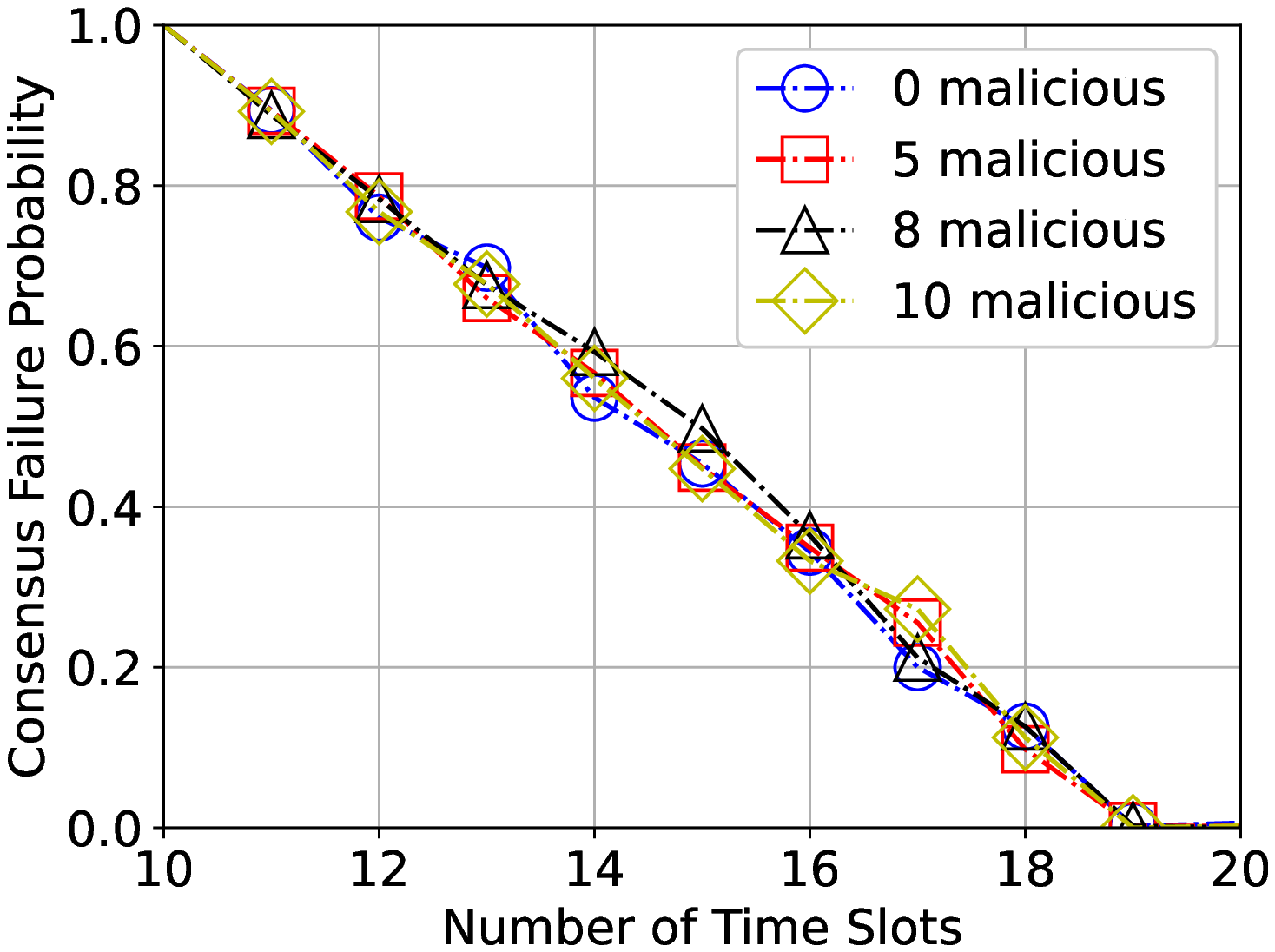}}\hspace*{-1.2em}
\subfigure[$\gamma=15$ \label{Tolerable_15}]{\includegraphics[width= 0.25\linewidth ]{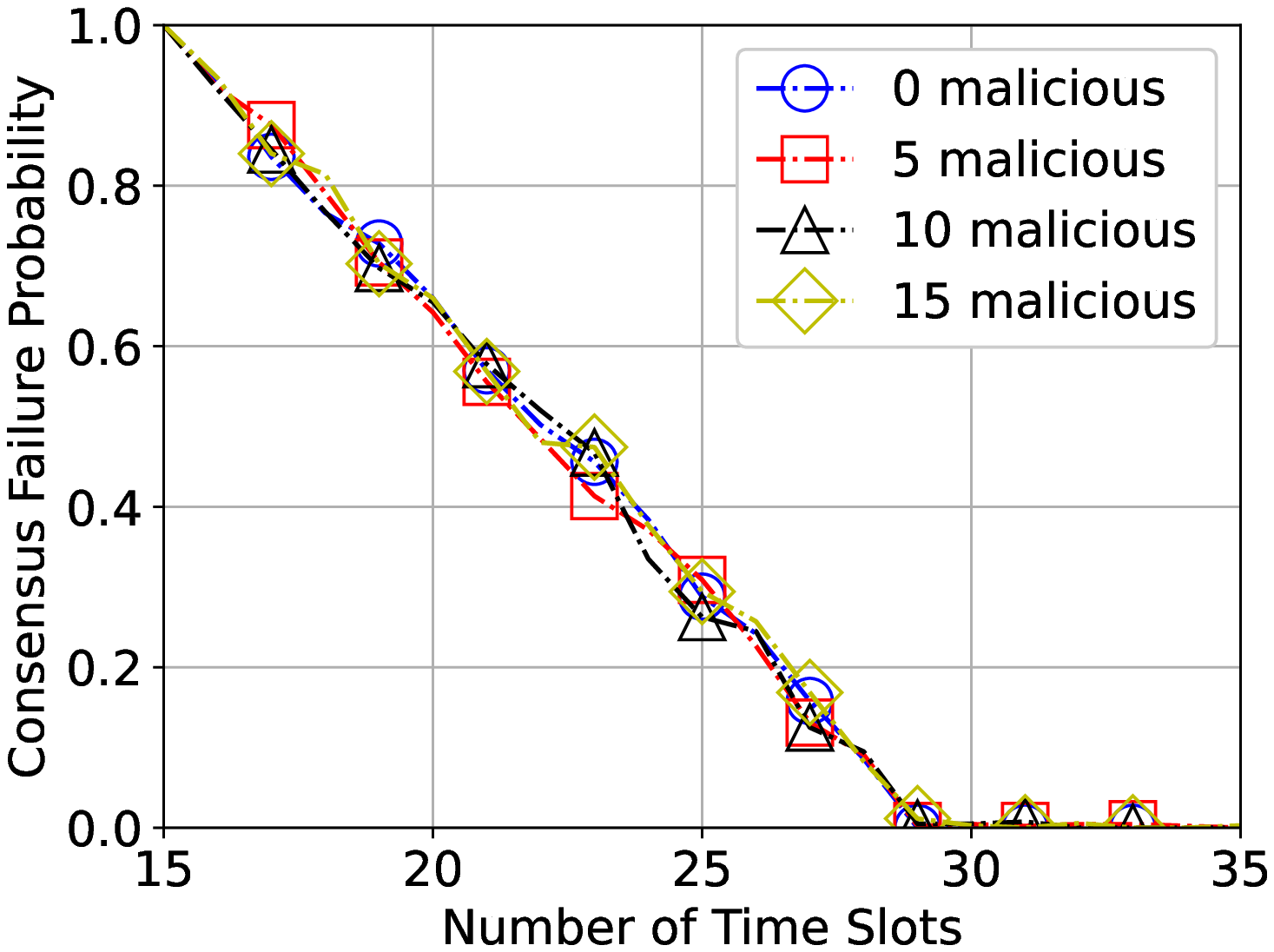}} \hspace*{-1.2em}
}
\subfigure[$\gamma=20$ \label{Tolerable_20}]{\includegraphics[width= 0.25\linewidth ]{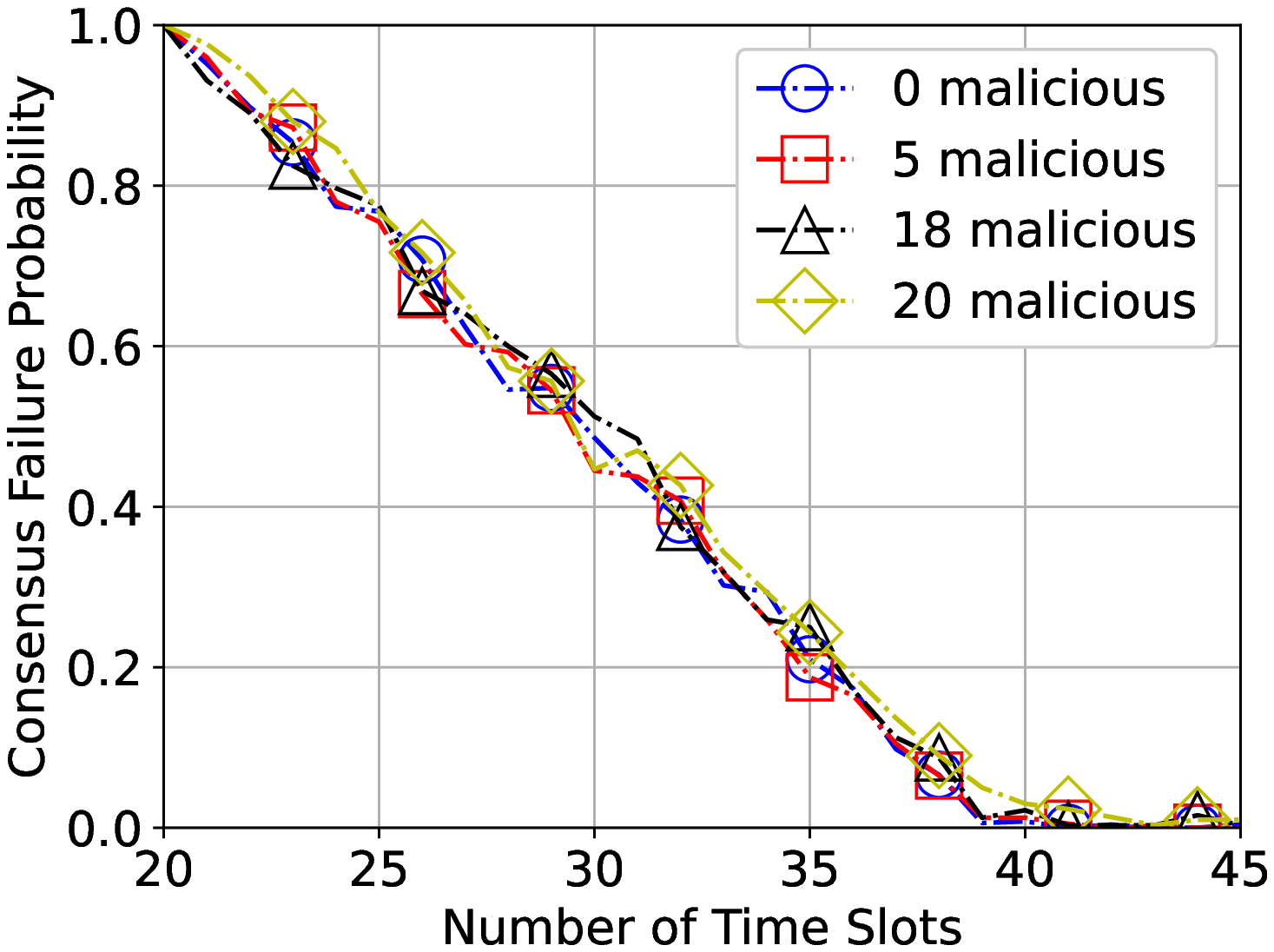}}\hspace*{-1.2em}
\subfigure[$\gamma=24$ \label{Tolerable_24}]{\includegraphics[width= 0.25\linewidth ]{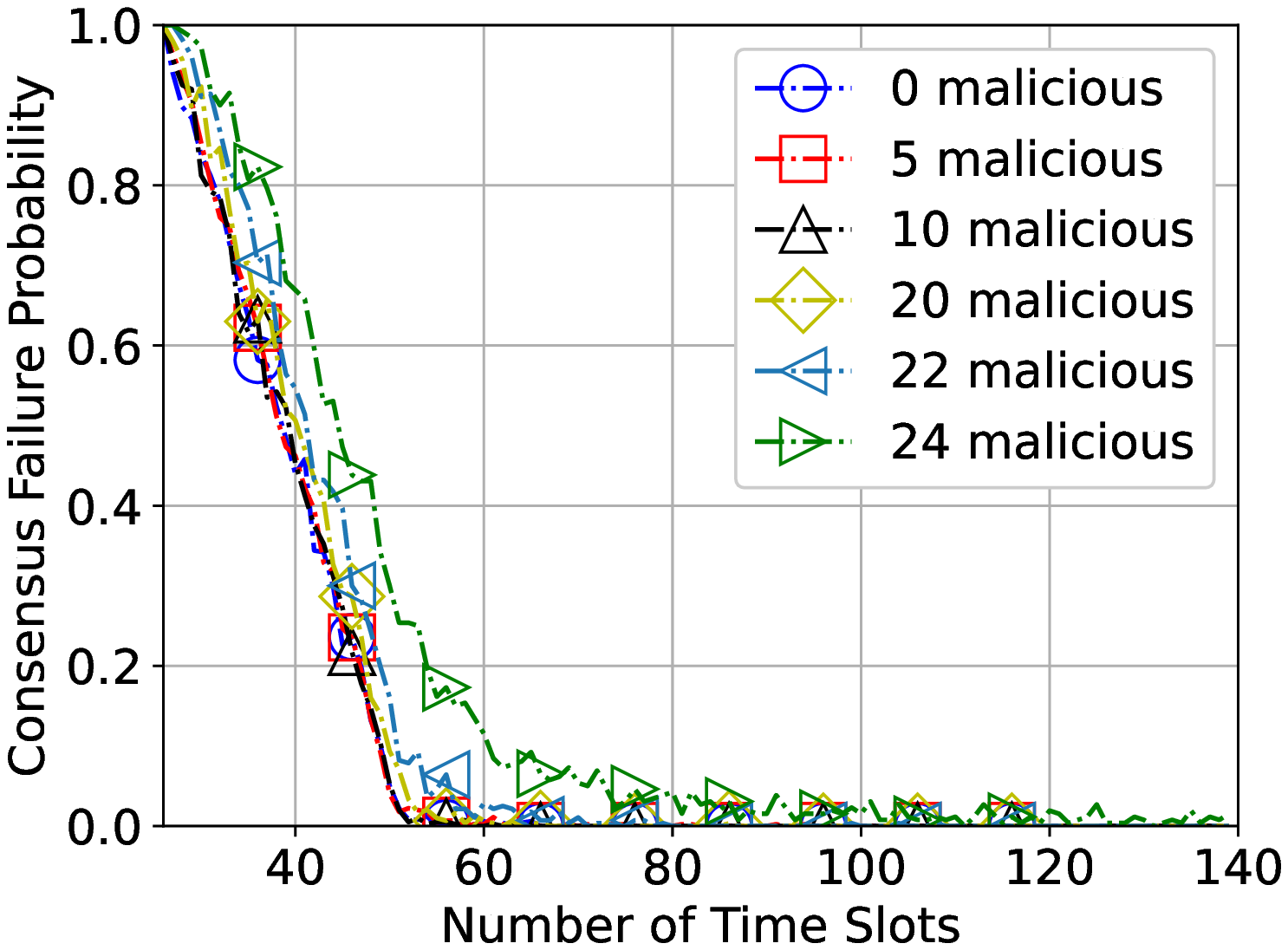}}
\caption{The number of time slots for 2LDAG to achieve consensus with malicious nodes occur when the number of maximum tolerable malicious nodes is $\gamma=\{10, 15, 20, 24\}$. The block body size is $C=0.5$ MB. Each node has a random block generation rate of one block per $\{1, 2\}$ time slots.
}
\label{attack_figs}
\end{figure*}





%
\subsection{Communication Overhead}
%
The block size is $C=0.5$ MB and each node generates one block at each time slot. We consider two possible number of malicious nodes, namely 33\% and 49\% of the total number of nodes are malicious. 
These values correspond to the number of tolerable malicious nodes in PBFT and IOTA or traditional blockchains such as Bitcoin.  This means 2LDAG reaches a consensus with paths that contain 17 and 26 physical nodes. From Fig.~\ref{comms_all}, we see that the overall node average communication overhead of 2LDAG is almost three orders of magnitude lower than PBFT and IOTA. Note that the communication overhead of 2LDAG is almost zero in the first 50 time slots. This is because nodes start to validate other blocks 
after $|\mathcal{V}|=50$ time slots.
Moreover, the communication overhead of 2LDAG for consensus is much higher than DAG construction, see Fig.~\ref{comms_generate} and \ref{comms_consensus}. 
This is because 2LDAG only transmits digests for block generation, but needs to transmit block headers for consensus. We see that 2LDAG which tolerates 49\% malicious nodes has a higher communication overhead for consensus, because it needs to construct longer paths to achieve PoP consensus. Fig.~\ref{comms_hist} shows the CDF of the communication overhead for each node at 200 time slot. We see that more than 90\% of the nodes transmit less than 40 MB of data, while others may transmit up to 160 MB of data. This is because a few nodes are important for forwarding data, which are vulnerable to attacks.

\subsection{Time for Consensus}
Lastly, we study the number of time slots required by PoP to reach consensus in 2LDAG when malicious nodes exist.  Each node generates one block per one or two time slots. We evaluate the consensus failure probability when 2LDAG verifies a block generated in the first $\gamma$ time slots. 
We say a consensus reaches when the consensus failure probability is zero.
We consider four scenarios where $\gamma\in\{10, 15, 20, 24\}$ in Fig.~\ref{attack_figs}. 
Note, for a network with 50 nodes, it can only tolerate up to 24 malicious nodes.
From Fig.~\ref{attack_figs}, the number of time slots to reach consensus increases with $\gamma$. This is because PoP requires a longer path. Moreover, the number of malicious nodes does not have a significant influence on the consensus time for the following $\gamma$ values: $\{ 10, 15, 20\}$.
However, consensus takes up to 120 time slots when $\gamma=24$. This is because a validator must find a path that contains all honest nodes to verify a block.
%


%
\section{Conclusions \label{sec_conclusion}}
This paper presents a novel 2LDAG architecture based on DAG and a PoP protocol to ensure IoT data integrity.
2LDAG has lower storage and communication costs as compared to solutions that leverage PBFT and DAG blockchains.    Hence, it is highly suited for use by IoT networks with resource constrained nodes. 
Moreover, 2LDAG has a high throughput. This is because all nodes are able to generate data blocks and transmit hashes independently.  Further,  the block generation rate of nodes is not limited by a consensus process. 
The simulation results show that 2LDAG reduces the storage cost by two orders of magnitude, and the communication cost by three orders of magnitude than PBFT and IOTA  blockchains.  
A potential future of 2LDAG is to construct the shortest path from a validator to a verifier in the physical layer when running the PoP protocol. This further reduces communication overhead for transmitting data block headers. Another future work is dynamic scenarios whereby nodes join and leave a network over time.


\bibliographystyle{ieeetr}
\bibliography{ref}

\begin{thebibliography}{10}

\bibitem{Lee2021AllON}
L.-H. Lee, T.~Braud, P.~Zhou, L.~Wang, D.~Xu, Z.~Lin, A.~Kumar, C.~Bermejo, and
  P.~Hui, ``All one needs to know about metaverse: A complete survey on
  technological singularity, virtual ecosystem, and research agenda,'' {\em
  ArXiv}, vol.~abs/2110.05352, 2021.

\bibitem{coorey2021health}
G.~Coorey, G.~A. Figtree, D.~F. Fletcher, and J.~Redfern, ``The health digital
  twin: advancing precision cardiovascular medicine,'' {\em Nature Reviews
  Cardiology}, vol.~18, no.~12, pp.~803--804, 2021.

\bibitem{9076112}
I.~Yaqoob, K.~Salah, M.~Uddin, R.~Jayaraman, M.~Omar, and M.~Imran,
  ``Blockchain for digital twins: Recent advances and future research
  challenges,'' {\em IEEE Network}, vol.~34, pp.~290--298, Sept. 2020.

\bibitem{DTJurdak}
S.~Suhail, R.~Hussain, R.~Jurdak, A.~Oracevic, K.~Salah, C.~S. Hong, and
  R.~Matulevi\v{c}ius, ``Blockchain-based digital twins: Research trends,
  issues, and future challenges,'' {\em ACM Comput. Surv.}, vol.~54, sep 2022.

\bibitem{sharma2020towards}
A.~Sharma, E.~S. Pilli, A.~P. Mazumdar, and P.~Gera, ``Towards trustworthy
  internet of things: A survey on trust management applications and schemes,''
  {\em Computer Communications}, vol.~160, pp.~475--493, 2020.

\bibitem{malik2019trustchain}
S.~Malik, V.~Dedeoglu, S.~S. Kanhere, and R.~Jurdak, ``Trustchain: Trust
  management in blockchain and {IoT} supported supply chains,'' in {\em IEEE
  International Conference on Blockchain (Blockchain)}, (Seoul, South Korea),
  pp.~184--193, May 2019.

\bibitem{zhou2020solutions}
Q.~Zhou, H.~Huang, Z.~Zheng, and J.~Bian, ``Solutions to scalability of
  blockchain: A survey,'' {\em IEEE Access}, vol.~8, pp.~16440--16455, 2020.

\bibitem{bitcoin_size}
``Bitcoin size.''
\newblock [Online]. Available:
  \url{https://ycharts.com/indicators/bitcoin_blockchain_size}.

\bibitem{XRP_ledger}
X.~Ledger, ``Capacity planning,'' 2022.
\newblock [Online]. Available: \url{https://xrpl.org/capacity-planning.html}.

\bibitem{bach2018comparative}
L.~M. Bach, B.~Mihaljevic, and M.~Zagar, ``Comparative analysis of blockchain
  consensus algorithms,'' in {\em Proceedings of the 41st International
  Convention on Information and Communication Technology, Electronics and
  Microelectronics}, (Opatija, Croatia), pp.~1545--1550, 2018.

\bibitem{bitcoin_fullnode}
BitcoinCore, ``Running a full node,'' 2022.
\newblock [Online]. Available: \url{https://bitcoin.org/en/full-node}.

\bibitem{8354891}
M.~N. Aman, B.~Sikdar, K.~C. Chua, and A.~Ali, ``Low power data integrity in
  {IoT} systems,'' {\em IEEE Internet of Things Journal}, vol.~5,
  pp.~3102--3113, Apr. 2018.

\bibitem{doukas2012enabling}
C.~Doukas, I.~Maglogiannis, V.~Koufi, F.~Malamateniou, and G.~Vassilacopoulos,
  ``Enabling data protection through pki encryption in iot m-health devices,''
  in {\em IEEE 12th International Conference on Bioinformatics and
  Bioengineering (BIBE)}, (Larnaca, Cyprus), pp.~25--29, Nov. 2012.

\bibitem{nakamoto2008bitcoin}
S.~Nakamoto, ``Bitcoin: A peer-to-peer electronic cash system,'' 2008.
\newblock [Online]. Available: \url{https://bitcoin.org/bitcoin.pdf}.

\bibitem{yu2020coded}
M.~Yu, S.~Sahraei, S.~Li, S.~Avestimehr, S.~Kannan, and P.~Viswanath, ``Coded
  merkle tree: Solving data availability attacks in blockchains,'' in {\em
  International Conference on Financial Cryptography and Data Security}, (Kota
  Kinabalu, Malaysia), pp.~114--134, Springer, Feb. 2020.

\bibitem{wang2019sok}
G.~Wang, Z.~J. Shi, M.~Nixon, and S.~Han, ``Sok: Sharding on blockchain,'' in
  {\em Proceedings of the 1st ACM Conference on Advances in Financial
  Technologies}, pp.~41--61, 2019.

\bibitem{yang2022scaling}
C.~Yang, K.-W. Chin, J.~Wang, X.~Wang, Y.~Liu, and Z.~Zheng, ``Scaling
  blockchains with error correction codes: A survey on coded blockchains,''
  {\em arXiv preprint arXiv:2208.09255}, 2022.

\bibitem{wang2020sok}
Q.~Wang, J.~Yu, S.~Chen, and Y.~Xiang, ``Sok: Diving into {DAG}-based
  blockchain systems,'' {\em arXiv preprint arXiv:2012.06128v2}, 2020.

\bibitem{popov2018tangle}
S.~Popov, ``The tangle,'' 2018.
\newblock [Online]. Available:
  \url{https://assets.ctfassets.net/r1dr6vzfxhev/2t4uxvsIqk0EUau6g2sw0g/45eae33637ca92f85dd9f4a3a218e1ec/iota1_4_3.pdf}.

\bibitem{ferdous2020blockchain}
M.~S. Ferdous, M.~J.~M. Chowdhury, M.~A. Hoque, and A.~Colman, ``Blockchain
  consensus algorithms: A survey,'' {\em arXiv preprint arXiv:2001.07091},
  2020.

\bibitem{salimitari2020survey}
M.~Salimitari, M.~Chatterjee, and Y.~P. Fallah, ``A survey on consensus methods
  in blockchain for resource-constrained {IoT} networks,'' {\em Elsevier
  Internet of Things}, vol.~11, p.~100212, 2020.

\bibitem{bowman2021elapsed}
M.~Bowman, D.~Das, A.~Mandal, and H.~Montgomery, ``On elapsed time consensus
  protocols,'' in {\em International Conference on Cryptology in India},
  pp.~559--583, Dec. 2021.

\bibitem{miller2016honey}
A.~Miller, Y.~Xia, K.~Croman, E.~Shi, and D.~Song, ``The honey badger of {BFT}
  protocols,'' in {\em ACM SIGSAC conference on computer and communications
  security}, (Vienna, Austria), pp.~31--42, Oct. 2016.

\bibitem{7784596}
S.~Al~Salami, J.~Baek, K.~Salah, and E.~Damiani, ``Lightweight encryption for
  smart home,'' in {\em 11th International Conference on Availability,
  Reliability and Security (ARES)}, (Salzburg, Austria), pp.~382--388, Aug.
  2016.

\bibitem{8897627}
I.~Butun, P.~Österberg, and H.~Song, ``Security of the internet of things:
  Vulnerabilities, attacks, and countermeasures,'' {\em IEEE Communications
  Surveys and Tutorials}, vol.~22, no.~1, pp.~616--644, 2020.

\bibitem{WSNTrust}
G.~Han, J.~Jiang, L.~Shu, J.~Niu, and H.~Chao, ``Management and applications of
  trust in wireless sensor networks: A survey,'' {\em Journal of Computer and
  System Sciences}, vol.~80, no.~3, pp.~602--617, 2014.

\bibitem{SPINS}
A.~Perrig, R.~Szewczyk, J.~D. Tygar, V.~Wen, and D.~E. Culler, ``{SPINS}:
  Security protocols for sensor networks,'' {\em Springer Wireless Networks},
  vol.~8, pp.~521--534, 2002.

\bibitem{SROUTING}
C.~Karlof and D.~Wagner, ``Secure routing in wireless sensor networks: Attacks
  and countermeasures,'' {\em Elsevier Ad-Hoc Networks}, vol.~1, p.~293=315,
  2003.

\bibitem{castro1999practical}
M.~Castro, B.~Liskov, {\em et~al.}, ``Practical byzantine fault tolerance,'' in
  {\em USENIX OSDI}, vol.~99, (New Orleans, Louisiana, USA), pp.~173--186,
  1999.

\end{thebibliography}
\end{document}